\numberwithin{equation}{section}
\newtheorem{thm}{Theorem}[section]
\newtheorem{cor}{Corollary}
\newtheorem{lem}{Lemma}[section]
\theoremstyle{definition}
\newtheorem{definition}{Definition}[section]
\newtheorem{example}{Example}[section]
\def\showcomment{}
  \newcommand{\comm}[1]{\emph{\textcolor{red}{\textbf{SC:}#1}}}
  \newcommand{\comm}[1]{} 
\definecolor{nblue}{rgb}{0,0,0} 
\definecolor{nbrown}{rgb}{0,0,0}
\definecolor{mbrown}{rgb}{0,0,0}
\definecolor{dblue}{rgb}{0,0,0} 
\definecolor{myblue}{rgb}{0,0,0}
\definecolor{mygreen}{rgb}{0,0,0}
\definecolor{purple}{rgb}{0,0,0}
\title{{\color{dblue}Zero-error {\color{mbrown}Slepian-Wolf} Coding of {\color{purple}Confined-} Correlated Sources with Deviation Symmetry}}
\author{Rick Ma\thanks{R. Ma was with the Department of Mathematics at the Hong Kong University of Science and Technology, Hong
Kong.} and Samuel Cheng\thanks{S. Cheng is with the School
of Electrical and Computer Engineering, University of Oklahoma, Tulsa,
OK, 74135 USA email: samuel.cheng@ou.edu. 
This work was supported in part by NSF under grant CCF 1117886.
 }
}
\begin{document}

\maketitle

\begin{abstract}
{\color{black} In this paper, we {\color{nbrown} use linear codes to study zero-error} {\color{mbrown}Slepian-Wolf} coding of a set of {\color{dblue} sources with deviation symmetry, where the sources}  are generalization of the Hamming sources \color{myblue}over {\color{nbrown}an} arbitrary field\color{black}. We extend \color{myblue} 
our previous codes, \color{black} {\em Generalized Hamming Codes for Multiple Sources}, to {\em Matrix Partition Codes} and use the latter to efficiently compress {\color{dblue} the target sources}. We further show that  
\color{myblue}every perfect or linear-optimal code is a Matrix Partition Code}. \color {black}
 We also present some conditions when {\color{dblue} Matrix Partition} Codes are perfect and/or linear-optimal. Detail discussions of {\color{dblue} Matrix Partition} Codes on Hamming sources are given at last as examples. 

\end{abstract}

\section{Introduction}


Slepian-Wolf (SW) Coding or Slepian-Wolf problem refers to 
separate encoding of multiple correlated sources but joint lossless decoding of the sources \cite{SlepianW:73}. 
Since then, many researchers have looked into ways to implement SW coding efficiently. Noticeably, 
Wyner was the first who realized that 
linear coset codes can be used to tackle the problem
\cite{Wyner:74}. Essentially, considering the source from each terminal as a column vector, the encoding output will simply be the multiple of a ``fat''\footnote{The input matrix is ``fat'' so that the length of encoded vector will be shorter than {\color{dblue} or equal to} the original.} coding matrix and the input vector. The approach was 
popularized by Pradhan {\em et al.} more than two decades later \cite{PradhanR:99}.
Practical syndrome-based schemes for S-W coding using channel codes 
have been further studied in
\cite{StankovicLX:06,pradhan2005gcc,YangCXZ:03,LiverisXG:03b,ChouPR:03,MitranB:02a,WangO:01,Servetto:02,zamani2009flexible}. 

Unlike many prior works focusing on near-lossless compression, in this work we consider true lossless compression (zero-error reconstruction) in which sources are {\em always} recovered losslessly 
{\color{black}
\cite{al1997zero,koulgi2001minimum,koulgi2003zero,yan2000instantaneous,ma2011universality}.}
\color{myblue}
 So we say the SW code can {\em compress} $S$ only if any source tuple in $S$ can be reconstructed losslessly. \color{black}
Obviously, a SW code can compress $S$ 
if and only if its encoding map restricted to $S$ is injective (or 1-1). 

{\color{mbrown} The source model for zero-error SW coding can be quite a bit different from the typical probabilistic model studied in classic SW coding literatures.}
{\color{mbrown} For example, for} highly correlated sources, we expect that sources from most terminals are likely to be the same. The trivial case is when all $s$ sources are identical. 
{\color{black} The next (simplest non-trivial) possible case is when all sources except one are identical, and in the source that is different from the rest, only one bit differs from the corresponding bit of other sources. } Such source is known to be {\em Hamming source} \cite{ChengM:DCC2010} since it turns out that it is closely related to Hamming codes.


In \cite{ChengM:DCC2010},
we described
a generalized syndrome based coset code and extended 
the notions of a packing bound and a perfect code 
from regular channel coding to SW coding with an arbitrary number of sources. 
In \cite{MaC:rick2},
we introduced the notion of Hamming Code for Multiple Sources (HCMSs) as a perfect code solution for Hamming sources. {\color{black} Moreover, we have shown that {\color{black} there exist an infinite number of HCMSs} for three sources. However, we have also pointed out that not all perfect codes for Hamming sources can be represented as HCMSs.} In \cite{ma2011universality}, we extended HCMS to {\em generalized HCMS}. And we showed that 
{\color{black} any perfect SW code {\color{black} for} a Hamming source is equivalent to a generalized HCMS (c.f. Theorem 3 in \cite{ma2011universality})}.   
%

{\color{black}
Despite our prior results, Hamming source is a very restricted kind of sources and only binary Hamming sources had been studied in the past. 
In this paper, we extend our prior works to input sources in {\color{dblue} arbitrary} fields. Moreover, we introduce a much general kind of sources 
{\color{dblue} with deviation symmetry as to be spelled out in Definition \ref{def:szeto}. We will also show
such sources}
can be handled systematically with 
{\color{dblue}
the proposed {\em Matrix Partition Codes}, which 
 can be interpreted as an extension of the generalized HCMS described in \cite{ma2011universality}.}
We also show that the Matrix Partition Codes {\color{nblue} of} any linear-optimal compression (i.e., higher compression is impossible) is  
{\color{dblue} a Matrix Partition Code}.
We also present some conditions when {\color{dblue} the Matrix Partition} Codes are perfect and/or linear-optimal. Some more detail discussions are further given for special cases such as Hamming sources. 

Let us briefly summarize here the notations and conventions used in this paper. Matrices are generally denoted with upper case letters while vectors are denoted by lower case letters. Sets are denoted using script font and fields are denoted using blackboard bold letter font. As matrices are used as mapping during encoding, we call a matrix {\em injective} ({\em surjective}) when the mapping corresponding to the matrix is injective (surjective). We may also specify the domain of the mapping. When none is specified, it is understood that the domain contains all possible vectors. For example, we say $A |_{\mathcal{S}}$ is injective if the mapping corresponding to matrix $A$ with input domain $\mathcal S$ is injective. In other words, for any $\sigma_1, \sigma_2 \in \mathcal S$ and $\sigma_1 \neq \sigma_2$, $A \sigma_1 \neq A \sigma_2$. Further, we call a matrix $A$ a {\em row basis matrix} of a matrix $B$ if rows of $A$ form a basis of the row space of $B$. 
}

This rest of the paper is organized as follows. In the next section, we will introduce the {\color{dblue} target sources with {\color{mygreen} deviation symmetry} and the Matrix Partition Codes}. We will {\color{nblue} include some motivations of the setup in Section \ref{sect:motivation} and} {\color{dblue} also derive} the maximum possible compression {\color{dblue} achievable by a Matrix Partition} Code. In Section \ref{sect:perfect}, we discuss when a {\color{dblue} Matrix Partition} Code will be {\em perfect}. Moreover, we use generalized Hamming sources as an example and derive the necessary conditions of perfectness. In Section \ref{sect:universality}, we present the major result of this paper---the uniqueness of {\color{dblue} Matrix Partition} Codes. 
In Section \ref{sect:hamming}, we will present some new results that are restricted to a subset of {\color{dblue} sources with {\color{mygreen} deviation symmetry}}, the Hamming sources. 
\color{myblue}
In Section \ref{sect:fixing}, before concluding the paper, we will determine the condition on the source under which actual compression is possible. \color{black}

\section{{\color{dblue} Target Sources and Proposed Codes}}

{\color{nblue}
\subsection{{\color{mbrown}Confined-Correlated Source and Connection to Classic Probabilistic Source Models}}
\label{sect:motivation}

Slepian-Wolf (SW) coding {\color{mbrown}is typically referred to as} the (near-)lossless compression of jointly correlated sources with separate encoders but a joint decoder. 
And the source is {\color{mbrown}usually} modeled probabilistically in the {\em classic} setup. More precisely, an $s$ terminal system can have the sources $X_1$, $X_2$, $\cdots$, $X_s$, sampled from
some joint distribution $p(x_1,x_2,\cdots,x_s)$ at each time instance {\color{mbrown}independent of another time instances}. 

{\color{mbrown}Let us consider a simple example of a two terminal source ($s=2$) with $Pr(X_1=0,X_2=0)=Pr(X_1=1,X_2=1)=0.4$ and $Pr(X_1=1,X_2=0)=Pr(X_1=0,X_2=1)=0.1$. 
One can see that the marginal distributions of both terminals are uniformly distributed
(i.e., $Pr(X_1=1)=Pr(X_1=0)=Pr(X_2=1)=Pr(X_2=0)=0.5$). Thus, any sequence drawn from each terminal will be equally likely and applying {\em variable length} code on a sequence from one terminal will not improve compression efficiency. 
Note that this is true in many scenarios when such kind of symmetry exists and will be described more precisely in Definition \ref{def:szeto}.
}

Given a block of $n$ source {\color{mbrown}sequence} tuples $({\bf x}_1,\cdots, {\bf x}_s)$ sampled from the joint source (where each ${\bf x}_i$ has length $n$),
the encoders
$Enc_1, \cdots, Enc_s$ apply on their corresponding source {\color{mbrown}sequences} only. That is, we have the $i^{th}$ encoding output 
\begin{align}
{\bf y}_i=Enc_i({\bf x}_i) 
\end{align}
{\color{mbrown}has length $m_i$ and is} independent of ${\bf x}_j$, $j\neq i$. {\color{mbrown}For any encoder $i$, we choose to have the codeword length $m_i$ fixed (independent of the input sequence) since as we mentioned in the previous paragraph, variable length coding is not going to increase efficiency for the symmetric source that we consider here.}

Receiving the compressed outputs {\color{mbrown}${\bf y}_1,\cdots, {\bf y}_s$}, a joint decoding map $Dec$ will try to recover the source blocks from all terminals. That is, 
\begin{align}
 ({\bf \hat x}_1,\cdots,{\bf \hat x}_s) = Dec({\bf y}_1,\cdots, {\bf y}_s),
\end{align}
where ${\bf \hat x}_i$ is the estimate of ${\bf x}_i$. 

The problem of the {\color{mbrown}aforementioned} probabilistic setup is that {\color{mbrown}for a finite $n$,} true lossless compression is generally not possible.
Denote $\mathcal{S}$ as the set of all possible $({\bf x}_1, \cdots,{\bf x}_s)$ that can be sampled from the source. Define an encoding map 
\begin{align}
 Enc({\bf x}_1, \cdots,{\bf x}_s)=(Enc_1({\bf x}_1), \cdots, Enc_s({\bf x}_s)).
\end{align}
Obviously, for a SW coding scheme to be {\em truly} lossless, we must have the restricted map $Enc|\mathcal{S}$ to be injective {\color{mbrown}(i.e., no two possible inputs will result in the same encoded output)}. {\color{mbrown}Denote $\mathcal{X}_i$ and $\mathcal{Y}_i$ as the alphabets of input and output of the $i^{th}$ encoder, respectively.} For a finite $n$ and a general distribution {\color{mbrown} that $p(x_1,\cdots,x_s)\neq 0$ for any combination of scalars $x_1$,$\cdots$,$x_s$}, every $({\bf x}_1, \cdots,{\bf x}_s)$ will have a non-zero probability and thus is {\color{mbrown} in $\mathcal{S}$}. This essentially means that the size of the compressed source {\color{mbrown}$|\mathcal{Y}^{m_1}_1 \times \cdots \times \mathcal{Y}^{m_s}_s|$}, which has to be larger than $|\mathcal{S}|$ for true {\em lossless} {\color{mbrown}recovery}, is just $|\mathcal{X}^n_1 \times \cdots \times \mathcal{X}^n_s|$. 
{\color{mbrown} Therefore, one cannot have both true lossless recovery and real compression (i.e., the net encoded output is smaller than the input) in this case.} 

The catch here is that in a classic SW setup, we will allow $n$ to go to infinity. In consequence, for any distribution, we will have some $({\bf x}_1, \cdots,{\bf x}_s)$ (the jointly typical sequences) to have much higher probabilities than the rest, in such an extend that the other joint sequences will have negligible probabilities {\color{mbrown}(essentially 0 as $n$ goes to infinity)} and can be ignored. Thus we will have $|\mathcal{S}|$ smaller than $|\mathcal{X}^n_1 \times \cdots \times \mathcal{X}^n_s|$ if we exclude joint sequences that almost never happen. And this gives us a near-lossless compression {\color{mbrown}for a very large but finite $n$}. 

While the probabilistic approach of the classic SW setup leads to interesting theoretical performance bounds, an infinite $n$ is not realistic in practice. In particular, unlike a channel coding problem that is typically designed to operate at a very high sampling rate,  the sampling rate of a SW problem is not a design parameter but is determined by the nature of the source. For example, in a typical scenario where the sources are the temperature readings sampled from different locations, let say, every five minutes. A rather small $n$, say $100$, will already correspond to over eight hours of delay. Such a delay may not be acceptable in practical scenarios. 

To accommodate a finite delay, one may give up either the lossless requirement or the 
\color{purple}conventional \color{myblue}  probabilistic model. Giving up the lossless requirement will result in the general multiterminal source coding problem, a much more complicated setup where a general theoretical rate-distortion limit is still unknown. Instead, we will forfeit the conventional probabilistic model in this paper. 
Unlike the conventional SW case, the set $\mathcal{S}$ containing all possible joint sequences  is a proper subset of $ \mathcal{X}^n_1 \times \cdots \times \mathcal{X}^n_s$ even when $n$ is finite. We will call such source \color{purple} {\em confined-correlated source }\color{myblue}  to distinguish it from the conventional case. Another interpretation is simply that joint sequences are \color{purple} directly drawn from the set $\mathcal{S}$\color{myblue}.

\begin{definition}[\color{purple}Confined-\color{myblue}Correlated Source]
 Given a subset $\mathcal{S}$ of $\mathcal{X}^n_1 \times \cdots \times \mathcal{X}^n_s$, we call the source from which joint sequences are drawn a {\em confined-correlated source} if the probability of having a joint sequence \color{purple} outside $\mathcal {S}$ is zero.
{\color{mbrown} Note that once $\mathcal {S}$ has been defined, we will treat every element of its equally, regardless of the original probabilistic structure. And our discussion always directly starts with a given $\mathcal {S}$. Hence our coding is completely characterized by the source set $\mathcal{S}$.} 
\end{definition}

\color{myblue}
{\color{mbrown}Just as most other works in the literatures,} we will focus on linear {\color{mbrown}code} in this paper. 
{\color{mbrown}Here is our mathematical setting. Let $\mathbb F$ be an arbitrary field.
Let $s, n, m_1,\cdots, m_s$ be integers that $s\ge 2, n\ge 1, m_1\ge 0,\cdots, m_s\ge 0$. We restrict that $\mathcal{X}_1 =\cdots =\mathcal{X}_s=\mathcal{Y}_1 = \cdots = \mathcal{Y}_s = \mathbb{F}$}.
Hence the source $\mathcal S$ is a subset of 
$\overbrace{\mathbb{F}^n\times\cdots\times\mathbb{F}^n}^{s\mbox{ terms}}$ 
{\color{mbrown}and the output codeword space $\mathcal C=\mathbb F^{m_1}\times\cdots\times\mathbb F^{m_s}$}. Later we 
will further require $\mathcal S$ to be a source with deviation symmetry (Definition \ref{def:szeto}).
The encoding map is linear in the sense that  $Enc_i({\bf x}_i) = H_i {\bf x}_i$, where $H_i$ are $m_i\times n$ {\em{\color{mbrown} encoding} matrices} over $\mathbb F$ for all $i$. 
We can have lossless compression only if $(H_1,\cdots, H_s)|_\mathcal{S}$ is injective. Since we will only consider linear lossless compression in this paper, we will simply refer such kind of compression to as {\em compression} in the following, {\color{mbrown}except for emphasis.}





Beside injectivity, we certainly want the output space $\mathcal C$ to be small. 
For this purpose, we define some measure to quantify its size.
\begin{definition}[Total Code Length, {\color{mbrown}Compression Sum-Ratio},  {\color{mbrown}Compression Ratio} Tuples]
 The total code length $M$ of compression  $(H_1, \cdots, H_s)$ is given by $M=m_1+\cdots +m_s$, which is dim$\mathcal C$. We also define the compression {\color{mbrown}sum-ratio} as $M/n$ and the  {\color{mbrown}compression ratio} tuples as 
$(m_1/n, \cdots, m_s/n)$.
\end{definition}

\begin{definition}[Linear-Optimal Compression]
 A linear compression is said to be linear-optimal if there is no other linear compression with respect to the same source resulting in a shorter total code length. 
\label{defO}
\end{definition}

\begin{definition}[Perfect Compression]
The compression  $(H_1,H_2, \cdots, H_s)$ is said to be perfect if $\mathbb F$ is finite and 
$|\mathcal C|=|\mathbb F|^M=|\mathcal S|$.
\label{defP}
\end{definition}
While there is always a linear-optimal compression scheme, perfect compression does not always exist. A perfect compression scheme is obviously linear-optimal but the converse may not be true. 

}

\subsection{\color{dblue} {\color{nblue}{\color{mbrown}Confined-}Correlated} Sources with {\color{mygreen}Deviation Symmetry}}
%
%
%

{\color{nblue} Even with the restriction of linear codes, the considered problem is still too general.
We will introduce {\color{mbrown}the symmetric} constraint {\color{mbrown}mentioned in the last subsection} to our target source. Namely, if a joint sequence {\color{mbrown}$\sigma$} belongs to a source $\mathcal{S}$, so does a uniform shift {\color{mbrown}$\sigma + \bf (v,v,\cdots,v)$}. The condition is a rather mild one. Actually, if we imagine that each source output are just readings derived from a common base source, such symmetry will natural arise if the observers are symmetrically setup. 

Let us consider the equivalent relation on $\overset{\mbox{$s$ terms}}{\overbrace{\mathbb{F}^n \times \cdots \times \mathbb{F}^n}}$ 
that 
{\color{mbrown}
\begin{equation}
\sigma_1 \sim \sigma_2 \mbox{ iff $\sigma_1-\sigma_2= \bf (v,\cdots,v)$ for some ${\bf v}\in \mathbb F^n$. }
\label{eqn1.13}
\end{equation}}
Then the equivalence classes derived from the equivalence relation partition the joint sequence space $\mathbb{F}^n \times \cdots \times \mathbb{F}^n$ and we may redefine our target source as follows.

}

%
%

\begin{definition}[{\color{dblue} Sources with Deviation Symmetry}]
\label{def:szeto}
 A {\color{mbrown}confined-}correlated source $\mathcal S \subset \overbrace{\mathbb F^n \times \cdots \times \mathbb  F^n}^{\mbox{$s$ terms}}$ 
is said to be a {\color{dblue} {\em source with {\color{mygreen} deviation symmetry}}}
{\color{nblue}
if $\mathcal S$ is an union of equivalence classes derived from the equivalence relation specified by \eqref{eqn1.13}. 
}
\end{definition}

{\color{nblue}
A source with deviation symmetry is completely characterized by its composite equivalence classes, where each can in term be specified by any one element of the equivalence class. Let us define a representative set as follow. 

\begin{definition}[Representative Set]
 A representative set $\mathcal{D}$ of a source with deviation symmetry $\mathcal{S}$ contains exactly one element of each equivalence class that is a subset of $\mathcal{S}$. 
 \label{def:representative_set}
\end{definition}

Obviously, $\mathcal{D}$ is not unique and
since $\mathcal{D}$ contains exactly one element from an equivalence class, we have the following property.
\begin{equation}
\delta \in \mathcal D 
\Rightarrow {\bf
(v,\cdots,v)+\delta \notin \mathcal D,  \forall \mbox{ non-zero } v}\in \mathbb
F^n.
\label{eqn1.3}
\end{equation}
{\color{nblue} Furthermore, if both $\mathcal D$ and $\mathcal E$ are representation sets of $\mathcal S$,} then there exists a mapping
${\bf v}:\mathcal D\mapsto \mathbb F^n$ such that
\begin{equation}
\mathcal E=\{\bf ({\bf v}(\delta),\cdots,{\bf v}(\delta))+\delta\hspace{0.1cm} | \hspace{0.1cm}\delta \in \mathcal D\}.
\label{eqn1.12}
 \end{equation}
And from Definition \ref{def:representative_set},  source $\mathcal{S}$ can be completely characterized by $\mathcal{D}$ with  
\begin{equation}
\mathcal S(\mathcal D)=\{{\bf (v,\cdots,v)+\delta \hspace{0.1cm} | \hspace{0.1cm}v}\in \mathbb F^n, \delta\in \mathcal D\}.
\label{eqn1.1} 
\end{equation}
Moreover, 
$\forall \sigma \in \mathcal S$, $\exists$ a unique $\delta \in \mathcal D$ and
a unique ${\bf v}\in \mathbb F^n$ such that 
\begin{equation}
\sigma =\bf (v,\cdots,v) + \delta. 
 \label{eqn1.11}
\end{equation}
Indeed, \eqref{eqn1.1} guarantees the existence of such $\bf v$ and $\delta$ and if 
$\bf (v,\cdots,v)+\delta=(u,\cdots,u)+\zeta$ with
another pair of ${\bf u}\in \mathbb F^n, \zeta\in\mathcal D$, then both $\delta$ and 
$\bf (v-u,\cdots,v-u)+\delta$ are in $\mathcal D$. By \eqref{eqn1.3}, we will
have $\bf v=u$ and hence $\delta=\zeta$, and this guarantees the uniqueness. 
Finally, if $\mathbb{F}$ is a finite set, it is easy to see that 
\begin{equation}
|\mathcal S|=|\mathbb F|^n |\mathcal D|.                                        
\label{eqn1.4}
\end{equation}
}
%
%

%


{\color{nblue}
\begin{example}[Hamming Sources]
 A Hamming source  \cite{ma2011universality} $\mathcal S$ as defined by
\begin{equation}
\mathcal S=\{{\bf(v,...,v)}+(\underset{i \mbox{ terms}}{\underbrace{{\bf 0},\cdots,a
{\bf e}_j}},\cdots,{\bf 0}) | a\in \mathbb F, 1 \le i \le s, 1 \le j \le n\}
\label{eqn1.5} 
\end{equation}
is clearly a source with deviation symmetry, where ${\bf e}_j$ is a length-$n$ vector with zeros for all but the $j^{th}$ component being 1. For $s \ge 3$, we can simply choose the representative set as 
\begin{equation}
\mathcal D=\{(\underset{i \mbox{ terms}}{\underbrace{{\bf 0},\cdots,a {\bf e}_j}},\cdots,{\bf 0})
| a\in \mathbb F,  1 \le i \le s, 1 \le j \le n\}
\label{eqn1.6}
\end{equation}
and we have
\begin{equation}
|\mathcal S|=|\mathbb F|^n(1+s(|\mathbb F|-1)n) \mbox{ for finite } \mathbb F. 
\label{eqn1.7}
 \end{equation}
But when $s=2$, \eqref{eqn1.6} is not a good choice as
\begin{equation}
{\bf (0, e}_1)=({\bf e}_1,  {\bf e}_1)+(-{\bf e}_1, {\bf 0}),
\label{eqn1.8} 
\end{equation}
which contravenes the restriction \eqref{eqn1.3}. Instead, we {\color{dblue} may choose}
\begin{equation}
\mathcal D=\{({\bf 0},a {\bf e}_j) | a\in \mathbb F, 1\le j\le n\} 
\label{eqn1.9}
\end{equation}
and get
\begin{equation}
|\mathcal S|=|\mathbb F|^n(1+(|\mathbb F|-1)n) \mbox{ for finite } \mathbb F.   
\label{eqn1.10}
\end{equation}

\end{example}
}



{\color{nblue} Before we end this section, we would like to define a vectorized correspondence of $\mathcal D$ for later usage.} Let
\begin{equation}
\color{myblue}\tilde{\mathcal D}\color{black}=\left\{\left.\begin{pmatrix}
{\bf d}_1 \\
            {\bf d}_2 \\
             \vdots \\
             {\bf d}_s      
     \end{pmatrix}
   \right| ({\bf d}_1,\cdots,{\bf d}_s)\in \mathcal D \right\}\subset \mathbb F^{sn}. 
 \label{eqn2.1}
\end{equation}
We have 
\begin{equation}
|\mathcal D| = |\color{myblue}\tilde{\mathcal D}\color{black}| \mbox{ for finite field $\mathbb F$}.
\label{eqn1.15}
\end{equation}

\subsection{{\color{dblue} Pre-Matrix Partition} Codes}



{\color{nblue}

}
The following theorem suggests a way to construct codes for {\color{dblue} sources with {\color{mygreen} deviation symmetry}}. We call such codes {\color{dblue} Pre-Matrix Partition} Codes as the name {\em \color{dblue} Matrix Partition Codes} will be reserved to the more refined codes to be discussed shortly afterward. 

\begin{thm}[{\color{dblue} Pre-Matrix Partition} Codes]
\label{thm2.1}
Let $P$ be an $r \times sn$ matrix $(r\in \mathbb Z_+)$ over $\mathbb F$ s.t.
\begin{equation}
                             P|_{\color{myblue}\tilde{\mathcal D}\color{black}} \mbox{ is } 
                             injective.        
\label{eqn2.2}
\end{equation}
Suppose P can be partitioned into
\begin{equation}
 P=[Q_1| \cdots      |Q_s] \mbox{ s.t. } Q_1+\cdots +Q_s=0, 
\label{eqn2.3}
\end{equation}
where all $Q_i$ are $r \times n$ matrices.
Then for any matrix $T'$ that
\begin{equation}
\begin{pmatrix}
Q_1 \\
 \vdots \\
 Q_{s}\\
  T'
 \end{pmatrix}
      \mbox{ forms an injective matrix},                                     
\label{eqn2.5}
 \end{equation}
 we let $\{G'_i | 1 \le i \le s\}$ be a row partition of $T'$, ie
\begin{equation}
\begin{pmatrix}
G'_1 \\
 \vdots \\
 G'_{s}\\
   \end{pmatrix}             = T'.
\label{eqn2.6} 
\end{equation}
Encoding matrices $(H_1,...,H_s)$ with
\begin{equation}
 \mbox{null} H_i=\mbox{null} \begin{pmatrix}
                G'_i \\Q_i
               \end{pmatrix}
\mbox{ for all $i$}   
\label{eqn2.7}                                              
\end{equation}
form a compression that we name Pre-Matrix Partition Code.
\end{thm}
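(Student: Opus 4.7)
The plan is to show that the encoding map $(H_1,\ldots,H_s)$ restricted to $\mathcal{S}$ is injective. Suppose $\sigma,\sigma'\in\mathcal{S}$ produce the same encoded tuple, so that $H_i(\mathbf{x}_i-\mathbf{x}'_i)=0$ for every $i$, where $\sigma=(\mathbf{x}_1,\ldots,\mathbf{x}_s)$ and $\sigma'=(\mathbf{x}'_1,\ldots,\mathbf{x}'_s)$. By the definition \eqref{eqn2.7} of $H_i$ through its null space, each difference $\mathbf{x}_i-\mathbf{x}'_i$ lies in $\mathrm{null}\begin{pmatrix}G'_i\\ Q_i\end{pmatrix}$, so both $Q_i(\mathbf{x}_i-\mathbf{x}'_i)=0$ and $G'_i(\mathbf{x}_i-\mathbf{x}'_i)=0$ for all $i$.

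Next I would decompose each source tuple via \eqref{eqn1.11} into $\sigma=(\mathbf{v},\ldots,\mathbf{v})+\delta$ and $\sigma'=(\mathbf{v}',\ldots,\mathbf{v}')+\delta'$ with $\delta,\delta'\in\mathcal{D}$, writing $\delta=(\mathbf{d}_1,\ldots,\mathbf{d}_s)$ and $\delta'=(\mathbf{d}'_1,\ldots,\mathbf{d}'_s)$. Then $\mathbf{x}_i-\mathbf{x}'_i=(\mathbf{v}-\mathbf{v}')+(\mathbf{d}_i-\mathbf{d}'_i)$. Applying $Q_i$ and summing over $i$ gives
\begin{equation*}
0=\sum_{i=1}^s Q_i(\mathbf{x}_i-\mathbf{x}'_i)=\Bigl(\sum_{i=1}^s Q_i\Bigr)(\mathbf{v}-\mathbf{v}')+\sum_{i=1}^s Q_i(\mathbf{d}_i-\mathbf{d}'_i)=P(\tilde{\delta}-\tilde{\delta}'),
\end{equation*}
where the first term vanishes thanks to the crucial hypothesis $Q_1+\cdots+Q_s=0$ in \eqref{eqn2.3}. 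Since $\tilde{\delta},\tilde{\delta}'\in\tilde{\mathcal{D}}$ and $P|_{\tilde{\mathcal{D}}}$ is injective by \eqref{eqn2.2}, we conclude $\tilde{\delta}=\tilde{\delta}'$, whence $\mathbf{d}_i=\mathbf{d}'_i$ for every $i$.

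This collapses the earlier equations to $Q_i(\mathbf{v}-\mathbf{v}')=0$ and $G'_i(\mathbf{v}-\mathbf{v}')=0$ for every $i$, so stacking produces $\begin{pmatrix}Q_1\\ \vdots\\ Q_s\\ T'\end{pmatrix}(\mathbf{v}-\mathbf{v}')=0$. By the injectivity assumption \eqref{eqn2.5} on this stacked matrix, $\mathbf{v}=\mathbf{v}'$, and combined with $\delta=\delta'$ this yields $\sigma=\sigma'$, completing the proof.

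The key conceptual step is recognizing how the constraint $\sum_i Q_i=0$ decouples the two pieces of the decomposition: it eliminates the common shift $\mathbf{v}-\mathbf{v}'$ from the summed equation so that injectivity of $P$ on $\tilde{\mathcal{D}}$ handles the ``representative'' part $\delta$ alone, while the auxiliary $T'$ then takes care of the residual ``shift'' part $\mathbf{v}$. I do not anticipate a genuine obstacle; the main care needed is bookkeeping between $\sigma$-level differences, $\delta$-level differences, and $\mathbf{v}$-level differences, and making sure the two injectivity hypotheses are deployed on the correct piece.
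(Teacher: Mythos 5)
Your proof is correct and follows essentially the same route as the paper's: pass to the difference of two source tuples, use the null-space characterization \eqref{eqn2.7} to extract $Q_i$ and $G'_i$ annihilation, sum and invoke $\sum_i Q_i=0$ together with injectivity of $P|_{\tilde{\mathcal D}}$ to kill the representative part, then use \eqref{eqn2.5} to kill the common shift. The only cosmetic difference is that the paper first reduces to the special case $H_i=\begin{pmatrix}G'_i\\ Q_i\end{pmatrix}$ before differencing, whereas you work with general $H_i$ directly through its null space; the two are equivalent.
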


\begin{proof}
Define {\color{mbrown}$\mathcal S_+ = \{ \sigma_1 - \sigma_2 | \sigma_1, \sigma_2\in \mathcal S \}$}.
Since $(H_1,...,H_s) |_\mathcal S$ is injective iff 
\begin{equation}
\mbox{null} H_1 \times \mbox{null} H_2  \times \cdots  \times \mbox{null} H_s \cap \mathcal S_+ =
\{0\},               
\label{eqn2.7b}
 \end{equation}
 {\color{black} the validity of the compression solely depends on the null spaces of coding matrices $H_i$}. Thus we only need to prove for the special case when 
$ H_i=\begin{pmatrix}
       G'_i \\Q_i
      \end{pmatrix}
$ for all $i$.  

Suppose
\begin{equation}
\begin{pmatrix}
 G'_i \\Q_i
\end{pmatrix}
({\bf u}+{\bf d}_i) = 
\begin{pmatrix}
 G'_i \\Q_i
\end{pmatrix}
({\bf v}+{\bf f}_i) 
\mbox{ for $1 \le i \le s$},           
\label{eqn2.11}
\end{equation}
where ${\bf u},{\bf v} \in \mathbb F^n$; $({\bf d}_1,\cdots,{\bf d}_s), ({\bf f}_1,\cdots,{\bf f}_s)\in \mathcal D$.
We get
\begin{equation}
\begin{pmatrix}
 G'_i \\Q_i
\end{pmatrix}
 ({\bf w}+{\bf d}_i-{\bf f}_i)={\bf 0} 
\mbox{ for $1 \le i \le s$, }
\label{eqn2.12} 
\end{equation}
where $\bf w=u-v$. In particular,
\begin{equation}
Q_i({\bf w}+{\bf d}_i-{\bf f}_i)={\bf 0} \mbox{ for $1 \le i \le s$ }       
\label{eqn2.13}
%
\end{equation}
and hence
\begin{equation}
Q_1({\bf w}+{\bf d}_1-{\bf f}_1)+\cdots +Q_s({\bf w}+{\bf d}_s-{\bf f}_s)={\bf 0}.
\label{eqn2.14}
%
\end{equation}
By \eqref{eqn2.3}, we get
\begin{align}
Q_1({\bf d}_1)+...+Q_s({\bf d}_s)&=Q_1({\bf f}_1)+..+Q_s({\bf f}_s) \\
\Rightarrow
P \begin{pmatrix}
   {\bf d}_1 \\ \vdots \\ {\bf d}_s
  \end{pmatrix}
= 
P \begin{pmatrix}
   {\bf f}_1 \\ \vdots \\{\bf f}_s
  \end{pmatrix}.
\label{eqn2.15}
%
\end{align}
By \eqref{eqn2.2},
\begin{equation}
\begin{pmatrix}
 {\bf d}_1 \\ \vdots \\ {\bf d}_s
\end{pmatrix}
= 
\begin{pmatrix}
 {\bf f}_1 \\ \vdots \\ {\bf f}_s
\end{pmatrix}.
\label{eqn2.16}
\end{equation}
Then \eqref{eqn2.12} become
\begin{equation}
\begin{pmatrix}
G'_i \\
 Q_i
 \end{pmatrix}
 ({\bf w}) = {\bf 0} \mbox{ for $1 \le i \le s$},
\label{eqn2.17}
\end{equation}
 which gives                       
\begin{equation}
\begin{pmatrix}
 Q_1 \\ \vdots  \\ Q_{s} \\ T' 
\end{pmatrix}
({\bf w}) = {\bf 0} \mbox{ {\color{black} (c.f. \eqref{eqn2.6})}.}
 \label{eqn2.18}
\end{equation}
Hence we must have $\bf w=u-v=0$ by 
\eqref{eqn2.5}.

\end{proof}

 \color{myblue}
The Pre-Matrix Partition Codes fulfill the basic requirement of our definition of compression, {\color{mbrown}i.e.,} injectivity. 
They do not take the sizes of the output codeword spaces into account.
 In the following, we are going to put more restriction on the codes to maximize the compression efficiency {\color{mbrown}(in the sense of Theorem \ref{thm2.2})}. 
\color{myblue}

\color{myblue} 

\subsection{\color{dblue} Matrix Partition Codes}
\label{subsection2.3}

\color{myblue}

\begin{definition}[Matrix Partition Codes]
Let $P$ be a matrix satisfying \eqref{eqn2.2} and \eqref{eqn2.3}.
Let 
{\color{nblue}
\begin{equation}
Y \mbox{ be a row basis matrix of } 
\begin{pmatrix}
Q_1 \\
\vdots \\
Q_s
\end{pmatrix}
\label{eqn2.18g} 
\end{equation}}
and $T$ be a matrix s.t. 
\begin{equation}
\begin{pmatrix}
 Y \\T
\end{pmatrix}
 \mbox{ is an invertible $n \times n$ matrix.                } 
 \label{eqn2.18z}
\end{equation}
Then we call the compression with encoding matrices
\begin{equation}
\color{myblue}{U_1}\color{black} \begin{pmatrix}
  C_1 \\ G_1
 \end{pmatrix},
\cdots ,
\color{myblue}{U_s}\color{black}
\begin{pmatrix}
 C_s \\G_s
\end{pmatrix}
\label{eqn2.18i}
\end{equation}
a Matrix Partition Code for source $\mathcal S$,
where
{\color{mbrown}$
 T = \begin{pmatrix}
      G_1 \\ \vdots \\ G_s 
     \end{pmatrix}
$, $C_i$ are row basis matrices of $Q_i$, and $U_i$ are arbitrary invertible matrices with appropriate sizes for all $i$}.
\label{defM}
\end{definition}

\color{myblue}
 Those $U_i$ may seem redundant and we usually set it to identity. But they are indispensable for the code to cover all perfect compression and linear-optimal compression.
\color{myblue}A Matrix Partition Code can be seen as a Pre-Matrix Partition Code with $T'=T$ and $G'_i=G_i$ for all $i$, and hence it is
a valid compression too.

\color{black}

This type of compression \eqref{eqn2.18i} first appeared in \cite{ma2011universality} to deal
with the multiple Hamming sources over $\mathbb Z_2$, in which we called it
Generalized HCMS for perfect compressions. Now we find that it is applicable to
any {\color{dblue} source with {\color{mygreen} deviation symmetry}}, a class of source much wider than Hamming source, over an 
arbitrary field. We would now call the code described by \eqref{eqn2.18i} as a {\color{dblue} Matrix Partition} Code and the matrix $P$ as the {\em parent matrix} of the {\color{dblue} Matrix Partition} Code.  We will show {\color{myblue} 
in Section \ref{sect:universality}} that
every compression of a source {\color{dblue} with {\color{mygreen} deviation symmetry}} can be deduced from Theorem \ref{thm2.1}.
Every linear-optimal or perfect compression is  {\color{dblue} a Matrix Partition Code}. {\color{dblue} Before doing that, we derive here the minimum  possible {\color{mbrown}sum-ratio} (highest compression) allowed by a Matrix Partition Code.}

\color{myblue}
\begin{thm} [{\color{mbrown}Compression Ratio} Tuples]
\label{thm2.2}
Suppose the parent matrix P of \eqref{eqn2.2} and \eqref{eqn2.3} is given. Then the 
 {\color{mbrown}compression ratio} tuples $(m_1/n,\cdots, m_s/n)$ of any Pre-Matrix Partition Code fulfill
$m_i=rank Q_i + r_i$,
where $r_i\in\{0,1,2,\cdot\cdot\}$ for all $i$ such that 
\begin{equation}
r_1+\cdots +r_s\ge n- 
rank\begin{pmatrix}
Q_1 \\
\vdots \\
Q_s
\end{pmatrix}.
\label{new1} 
\end{equation} 
Moreover equality \eqref{new1} holds if and only if the code is a Matrix 
Partition Code.
 \end{thm}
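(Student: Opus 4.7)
The plan is to split the statement into two assertions: the inequality $\sum_i r_i \ge n - \operatorname{rank}\bigl(\begin{smallmatrix}Q_1\\ \vdots\\ Q_s\end{smallmatrix}\bigr)$ that any Pre-Matrix Partition Code must satisfy, and the equality characterization of Matrix Partition Codes. Throughout I write $R := \operatorname{rank}\bigl(\begin{smallmatrix}Q_1\\ \vdots\\ Q_s\end{smallmatrix}\bigr)$ and $\alpha_i := \operatorname{rank}\bigl(\begin{smallmatrix}G'_i\\ Q_i\end{smallmatrix}\bigr) - \operatorname{rank} Q_i \ge 0$.

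For the inequality, first I would use the defining condition $\mbox{null}\, H_i = \mbox{null}\bigl(\begin{smallmatrix}G'_i\\ Q_i\end{smallmatrix}\bigr)$ from \eqref{eqn2.7} together with rank--nullity to obtain $\operatorname{rank} H_i = \operatorname{rank}\bigl(\begin{smallmatrix}G'_i\\ Q_i\end{smallmatrix}\bigr) = \operatorname{rank} Q_i + \alpha_i$; since $m_i$ is the row count of $H_i$, this forces $m_i \ge \operatorname{rank} Q_i + \alpha_i$ and hence $r_i := m_i - \operatorname{rank} Q_i \ge \alpha_i \ge 0$. Next I would exploit the injectivity hypothesis \eqref{eqn2.5}: the stack $\bigl(\begin{smallmatrix}Q_1\\ \vdots\\ Q_s\\ T'\end{smallmatrix}\bigr)$ has rank $n$, while each block $G'_i$ can contribute at most $\alpha_i$ rows independent of $Q_i$, hence at most $\alpha_i$ rows independent of the full $Q$-stack (since the row space of $Q_i$ sits inside the row space of the full stack). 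An inductive dimension count over the blocks then gives $n \le R + \sum_i \alpha_i$, and combined with $r_i \ge \alpha_i$ the desired bound $\sum_i r_i \ge n - R$ follows.

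The forward direction of the equality statement (Matrix Partition Code implies equality) is immediate from Definition \ref{defM}: $m_i = \operatorname{rank} C_i + (\text{rows of }G_i) = \operatorname{rank} Q_i + (\text{rows of }G_i)$, and invertibility of $\bigl(\begin{smallmatrix}Y\\ T\end{smallmatrix}\bigr)$ yields $\sum_i (\text{rows of }G_i) = n - \operatorname{rank} Y = n - R$. The converse is the main obstacle. Assuming $\sum_i r_i = n - R$, both of the chain inequalities in my previous argument must be tight, giving $r_i = \alpha_i$ for every $i$ and $\sum_i \alpha_i = n - R$. The first tightness makes every $H_i$ have full row rank with row space equal to that of $\bigl(\begin{smallmatrix}G'_i\\ Q_i\end{smallmatrix}\bigr)$. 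The second tightness says that if I pick a row basis $C_i$ of $Q_i$ and $\alpha_i$ rows $G_i$ from $G'_i$ completing $C_i$ to a basis of the row space of $\bigl(\begin{smallmatrix}G'_i\\ Q_i\end{smallmatrix}\bigr)$, then each $V_{G'_i}\cap(V_{\text{full}}+V_{G'_1}+\cdots+V_{G'_{i-1}}) = V_{G'_i}\cap V_{Q_i}$, so the chosen rows of $G_i$ lie outside $V_{\text{full}}+V_{G'_1}+\cdots+V_{G'_{i-1}}$. Consequently $T := \bigl(\begin{smallmatrix}G_1\\ \vdots\\ G_s\end{smallmatrix}\bigr)$ stacked together with any row basis $Y$ of the $Q$-stack produces an $n \times n$ invertible matrix $\bigl(\begin{smallmatrix}Y\\ T\end{smallmatrix}\bigr)$, matching Definition \ref{defM}. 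Finally, since $H_i$ and $\bigl(\begin{smallmatrix}C_i\\ G_i\end{smallmatrix}\bigr)$ are both $m_i \times n$ matrices with identical row spaces, they are related by a unique invertible $U_i$, exhibiting the code in Matrix Partition Code form.
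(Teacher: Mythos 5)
Your proof is correct and follows essentially the same route as the paper's: your $\alpha_i=\operatorname{rank}\bigl(\begin{smallmatrix}G'_i\\ Q_i\end{smallmatrix}\bigr)-\operatorname{rank}Q_i$ is exactly the paper's $\operatorname{rank}A_i$, the two inequalities you make tight are the paper's \eqref{new2} and \eqref{new3}, and the equality case is characterized identically (full row rank of $H_i$ yielding the invertible $U_i$, and the direct-sum decomposition $\mathrm{row}\,Y\oplus\mathrm{row}\,G_1\oplus\cdots\oplus\mathrm{row}\,G_s=\mathbb F^n$ yielding invertibility of $\bigl(\begin{smallmatrix}Y\\ T\end{smallmatrix}\bigr)$). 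No gaps worth flagging.
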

\color{purple}The proof below frequently uses the fact that
rank $A$+ rank$B\ge$ rank $\begin{pmatrix}A\\ B \end{pmatrix}$ and the equality
holds iff row$A\bigcap$ row$B=\{{\bf 0}\}$, where the word {\em row} means {\em the row space of}. \color{myblue}

\begin{proof}
Let $i\in\{1,\cdots,s\}$. Let $G'_i, Q_i, H_i$ be those defined in Theorem \ref{thm2.1}. 
WLOG, we decompose
\begin{equation}
G'_i=\begin{pmatrix}
A_i \\                D_i
\end{pmatrix},
\label{new1.5}
\end{equation}
such that
\begin{equation}
 \mbox{row }A_i\bigcap\mbox{row }Q_i=\{{\bf 0}\},\hspace{0.5cm} \mbox {row}\begin{pmatrix} A_i \\ Q_i \end{pmatrix}=\mbox{row }\begin{pmatrix} G'_i \\ Q_i \end{pmatrix}.
\label{new1.5.5}
\end{equation}
Equation \eqref{eqn2.7} becomes
 null$H_i=$null$\begin{pmatrix}A_i\\Q_i \end{pmatrix}$. Therefore
 row$H_i=$row$\begin{pmatrix}A_i\\Q_i \end{pmatrix}$ and 
 rank$H_i=$rank$\begin{pmatrix}A_i\\Q_i \end{pmatrix}$. Thus 
we have 
\begin{equation}
m_i=\mbox{number of rows of }H_i\ge\mbox{rank }H=
 \mbox{rank}A_i+\mbox{rank}Q_i. 
\label{new2}
\end{equation}
Moreover
$\begin{pmatrix}
Q_1 \\ \vdots \\   Q_s \\ A_1 \\ \vdots \\                 A_s
\end{pmatrix}$
 is injective (with $n$ columns) by \eqref{eqn2.5} and the second equation in \eqref{new1.5.5}. 
We get
\begin{equation}
\mbox{rank}A_1+\cdots+\mbox{rank}A_s\ge n-\mbox{rank}\begin{pmatrix}
Q_1\\ \vdots \\ Q_s \end{pmatrix}.
\label{new3}
\end{equation}
By \eqref{new2} and \eqref{new3}, we get \eqref{new1} with $r_i=m_i-$rank$Q_i$.

Secondly, equality \eqref{new1} holds iff equalities \eqref{new2} and \eqref{new3} both hold.
Let $G_i$ be a row basic matrix of $A_i$ and  $C_i$ be a row basic matrix of $Q_i$. 
We have {\color{mbrown}row $G_i\bigcap$row $C_i=\{{\bf 0}\}$}, thanks to the first equation in \eqref{new1.5.5}.
Equality \eqref{new2} holds iff $H_i$ is a surjective matrix and hence a row basic matrix of  $\begin{pmatrix}A_i\\Q_i \end{pmatrix}$.
Notice that  $\begin{pmatrix}G_i\\C_i \end{pmatrix}$ is also a surjective matrix of $\begin{pmatrix}A_i\\Q_i \end{pmatrix}$, we conclude that 
equality \eqref{new2} holds iff
\begin{equation} H_i=U_i\begin{pmatrix}G_i \\ C_i \end{pmatrix}
\label{new4}
\end{equation}
 for an invertible 
matrix $U_i$. We also
let $Y$ be a row basic matrix of  $\begin{pmatrix}Q_1\\ \vdots\\  Q_s \end{pmatrix}$. So 
$\begin{pmatrix}Y\\ G_1\\ \vdots\\  G_s \end{pmatrix}$ is injective and \eqref{new3} is equivalent to
rank$G_1+\cdots +$rank$G_s\ge n-$rank$Y$, which holds
iff 
row$G_1\oplus\cdots\oplus$row$G_s\oplus$row$Y=\mathbb F^n$. Since all $G_1,\cdots, G_s$ and $Y$ are 
surjective, we conclude equality \eqref{new3} holds iff
\begin{equation}
\begin{pmatrix} Y \\ T \end{pmatrix}\mbox{ is bijective, for }
T=\begin{pmatrix}G_1\\ \vdots \\ G_s\end{pmatrix}.
\label{new5}
\end{equation}
Notice that \eqref{new4} and \eqref{new5} {\color{mbrown}are all we need to define a Matrix Partition Code} with the given $P$ (c.f.
Definition \ref{defM}) and so we {\color{mbrown}complete the proof}.
\end{proof}

\begin{cor}
\label{cor1}
All the values of  {\color{mbrown}compression ratio} tuples allowed by \eqref{new1} are {\color{mbrown}achievable} by the 
Pre-Matrix Partition Codes. 
\end{cor}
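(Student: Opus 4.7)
My plan is to prove the corollary by an explicit construction. Given any tuple of non-negative integers $(r_1,\ldots, r_s)$ with $r_1+\cdots+r_s\ge n-\mbox{rank}\begin{pmatrix}Q_1\\ \vdots\\ Q_s\end{pmatrix}$, I aim to exhibit a Pre-Matrix Partition Code whose encoding matrices $H_i$ have exactly $m_i=\mbox{rank}\,Q_i+r_i$ rows.

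First, I would set $k=n-\mbox{rank}\begin{pmatrix}Q_1\\ \vdots\\ Q_s\end{pmatrix}$ and pick vectors ${\bf v}_1,\ldots,{\bf v}_k\in \mathbb F^n$ such that $\{{\bf v}_1,\ldots,{\bf v}_k\}$ together with the rows of $\begin{pmatrix}Q_1\\ \vdots\\ Q_s\end{pmatrix}$ form a basis of $\mathbb F^n$. Since $\sum_i r_i\ge k$, one can choose non-negative integers $k_1,\ldots,k_s$ with $k_i\le r_i$ and $\sum_i k_i=k$, and distribute the $k$ vectors accordingly: let $G'_i$ be the $r_i\times n$ matrix formed by stacking $k_i$ of the chosen vectors above $r_i-k_i$ zero rows, and set $T'=\begin{pmatrix}G'_1\\ \vdots\\ G'_s\end{pmatrix}$. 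By construction the row space of $\begin{pmatrix}Q_1\\ \vdots\\ Q_s\\ T'\end{pmatrix}$ is all of $\mathbb F^n$, so the matrix is injective, verifying \eqref{eqn2.5}. Next, for each $i$, let $C_i$ be a row basis matrix of $Q_i$ (having $\mbox{rank}\,Q_i$ rows) and define $H_i=\begin{pmatrix}G'_i\\ C_i\end{pmatrix}$. Then $H_i$ has $r_i+\mbox{rank}\,Q_i=m_i$ rows, and since $C_i$ and $Q_i$ share the same row space we have $\mbox{null}\,H_i=\mbox{null}\begin{pmatrix}G'_i\\ Q_i\end{pmatrix}$, satisfying \eqref{eqn2.7}. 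Theorem \ref{thm2.1} then certifies that $(H_1,\ldots,H_s)$ is a Pre-Matrix Partition Code with the required compression ratio tuple $(m_1/n,\ldots,m_s/n)$.

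The argument is essentially linear-algebraic bookkeeping, and I do not anticipate any significant obstacle. The crux is the flexibility built into the Pre-Matrix Partition Code definition: padding $G'_i$ with zero rows lets one inflate the row count without disturbing either the injectivity of $\begin{pmatrix}Q_1\\ \vdots\\ Q_s\\ T'\end{pmatrix}$ or the null-space condition \eqref{eqn2.7}, and this slack is precisely what permits every tuple in the feasibility region \eqref{new1} to be realized, even when the inequality there is strict.
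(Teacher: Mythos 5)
Your construction is correct and follows essentially the same route as the paper's proof: extend a basis of the row space of $\begin{pmatrix}Q_1\\ \vdots\\ Q_s\end{pmatrix}$ to all of $\mathbb F^n$, distribute the extension vectors among the terminals, and absorb the slack $\sum_i r_i - \bigl(n-\mbox{rank}\begin{pmatrix}Q_1\\ \vdots\\ Q_s\end{pmatrix}\bigr)$ with zero rows. The only (cosmetic) difference is that you fold the zero-padding directly into the $G'_i$, whereas the paper first builds a Matrix Partition Code for the equality case and then augments its $H_i$ with $r_i-a_i$ zero rows.
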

\begin{proof}
Given any $r_1,\cdots, r_s$ such that equality \eqref{new1} holds. Let $T$ be a matrix defined in  \eqref{eqn2.18z} or  \eqref{new5}.
We have

\begin{equation}
r_1+\cdots + r_s = n-\mbox{rank}\begin{pmatrix} Q_1 \\ \vdots \\ Q_s \end{pmatrix}=\mbox{number of rows of } T.
\label{new6}
\end{equation}
Hence we can partition $T$ into those $G_i$ such that $G_i$ have $r_i$ rows for all $i$, respectively. Let
$(H_1,\cdots,H_s)$ be a compression of Matrix Partition Code defined by \eqref{eqn2.18i} or \eqref{new4}.
We have 
\begin{equation}
r_i=m_i-\mbox{rank}Q_i=\mbox{number of rows of } G_i.
\label{new7}
\end{equation}
Therefore $(H_1,\cdots,H_s)$ has the corresponding {\color{mbrown}compression ratio} tuples of the given $r_1,\cdots, r_s$.
As a result all the values allowed by the equality in \eqref{new1} are {\color{mbrown}achievable}.

In general, for any $(r_1,\cdots, r_s)$ allowed by \eqref{new1}, we let $a_i\in\{0,1,2,\cdots\}$ such that $r_i\ge a_i$ and
 \begin{equation}
a_1+\cdots +a_s= n- 
\mbox{rank}\begin{pmatrix}
Q_1 \\
\vdots \\
Q_s
\end{pmatrix}.
\label{new7.5} 
\end{equation}
By the previous argument, there exists a  Matrix Partition
Code $(H_1,\cdots, H_s)$ such that $m_i=a_i+$rank$Q_i$ for all $i$. Then 
$(H'_1,\cdots, H'_s)$ is the Pre-Matrix Partition Code with the {\color{mbrown}desired} {\color{mbrown}compression ratio} tuples,
where those $H'_i$ are obtained by augmenting the corresponding $H_i$ vertically with $r_i-a_i$ zero rows for all $i$. 
\end{proof}

\begin{cor}
\label{cor2} The total code length $M\ge$rank$Q_1+\cdots+$rank$Q_s-$rank$
\begin{pmatrix}Q_1 \\ \vdots \\ Q_s \end{pmatrix}$ for any Pre-Matrix Partition Code, and
the equality holds if and only if the code is a Matrix Partition Code. 
\end{cor}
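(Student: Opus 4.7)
The plan is to derive this corollary as a direct accounting of the per-encoder bounds already established in Theorem~\ref{thm2.2}, so essentially no new machinery is needed. The statement concerns the \emph{sum} $M = m_1 + \cdots + m_s$, while Theorem~\ref{thm2.2} already gives a decomposition of each $m_i$ together with an inequality \eqref{new1} on the auxiliary quantities $r_i$. The entire argument is therefore a summation.

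Concretely, I would first invoke Theorem~\ref{thm2.2} to write, for each $i \in \{1,\ldots,s\}$, $m_i = \mathrm{rank}\,Q_i + r_i$ with $r_i \in \{0,1,2,\ldots\}$. Summing over $i$ gives
\[
M \;=\; \sum_{i=1}^{s} m_i \;=\; \sum_{i=1}^{s} \mathrm{rank}\,Q_i \;+\; \sum_{i=1}^{s} r_i .
\]
Then I would apply inequality \eqref{new1}, namely $\sum_i r_i \ge n - \mathrm{rank}\!\begin{pmatrix} Q_1 \\ \vdots \\ Q_s \end{pmatrix}$, and substitute into the right-hand side above to obtain the claimed lower bound on $M$. (I would double-check the additive constant $n$ when transcribing the final expression, since \eqref{new1} contributes a $+n$ term that should appear in the displayed corollary.)

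For the equality clause I would invoke the ``if and only if'' portion of Theorem~\ref{thm2.2}: equality in \eqref{new1} holds exactly when the code is a Matrix Partition Code. Since the equalities $m_i = \mathrm{rank}\,Q_i + r_i$ are definitional, summing them preserves both inequality and equality, so the only inequality that can be loose is \eqref{new1}. Hence the aggregate bound on $M$ is tight if and only if \eqref{new1} is tight, which is if and only if the code is a Matrix Partition Code.

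I do not anticipate any substantive obstacle, since Theorem~\ref{thm2.2} has already carried out all the nontrivial rank-space work (the row-space intersections, the surjectivity reductions, and the direct-sum decomposition characterizing Matrix Partition Codes). The only care needed is to keep the arithmetic of the additive $n$ straight and to state clearly that the ``iff'' at the encoder level lifts, without any extra argument, to the aggregate level after summation.
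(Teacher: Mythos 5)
Your proposal is correct and is exactly the paper's argument: the paper's entire proof of this corollary is the one-line remark that $M=m_1+\cdots+m_s$, i.e., summing the per-encoder decomposition $m_i=\mathrm{rank}\,Q_i+r_i$ of Theorem \ref{thm2.2} and applying \eqref{new1}, with the equality clause inherited directly from the ``iff'' of that theorem. You are also right to flag the additive constant: the summation yields $M\ge n+\mathrm{rank}\,Q_1+\cdots+\mathrm{rank}\,Q_s-\mathrm{rank}\bigl(\begin{smallmatrix}Q_1\\ \vdots\\ Q_s\end{smallmatrix}\bigr)$, so the $+n$ missing from the displayed corollary appears to be a typo in the paper's statement rather than a defect in your derivation.
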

\begin{proof}
Simply because $M=m_1+\cdots + m_s$.
\end{proof}

\color{black}
It is tempting to think that changing the choice of $\mathcal D$ should end up with a
different parent matrix $P$ that may increase compression {\color{nblue} efficiency}.
However, it turns out that it is not the case. The parent matrix $P$ is independent of such a choice as shown by the following theorem.

\begin{thm}
 \label{thm2.3}
Let both $\mathcal D$ and $\mathcal E$ be representation sets of source $\mathcal S$. If $P$ is a parent matrix of $\mathcal D$ as specified in Theorem \ref{thm2.1}, 
then $P|_{\tilde{\mathcal{E}}}$ is also injective, where $\tilde{\mathcal E}$ is a vectorized $\mathcal E$ given by

\begin{equation}
\tilde{\mathcal E}=\left\{  \left.\begin{pmatrix}
{\bf f}_1 \\
      \vdots \\
      {\bf f}_s      
     \end{pmatrix} \right| ({\bf f}_1,\cdots,{\bf f}_s)\in \mathcal E \right\}. 
\quad (c.f.  \, \eqref{eqn2.1})  
\label{eqn2.19}
\end{equation}
\end{thm}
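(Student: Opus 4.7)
The plan is to exploit two facts: (i) any two representative sets $\mathcal{D}$ and $\mathcal{E}$ are related by the uniform-shift map of \eqref{eqn1.12}, and (ii) the parent matrix $P$ annihilates uniform shifts because its column blocks satisfy $Q_1+\cdots+Q_s=0$. Together these will show that $P$ produces the same output on a $\tilde{\mathcal{E}}$-vector as on its corresponding $\tilde{\mathcal{D}}$-vector, so injectivity on $\tilde{\mathcal{D}}$ transfers to $\tilde{\mathcal{E}}$.

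More concretely, I would first fix the notation from \eqref{eqn1.12}: there is a map $\mathbf{v}:\mathcal{D}\to\mathbb{F}^n$ such that every element of $\mathcal{E}$ has the form $(\mathbf{v}(\delta)+\mathbf{d}_1,\ldots,\mathbf{v}(\delta)+\mathbf{d}_s)$ for some $\delta=(\mathbf{d}_1,\ldots,\mathbf{d}_s)\in\mathcal{D}$, and this correspondence $\delta\mapsto\mathbf{v}(\delta)$ is a bijection between $\mathcal{D}$ and $\mathcal{E}$ (each equivalence class contributes exactly one representative to each set). Vectorizing, this gives a bijection $\Phi:\tilde{\mathcal{D}}\to\tilde{\mathcal{E}}$ sending the column stack of $\delta$ to the column stack of $\mathbf{v}(\delta)+\delta$.

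Next I would compute the action of $P$ on a generic element of $\tilde{\mathcal{E}}$. Using the partition $P=[Q_1|\cdots|Q_s]$,
\begin{equation*}
P\begin{pmatrix}\mathbf{v}(\delta)+\mathbf{d}_1\\ \vdots \\ \mathbf{v}(\delta)+\mathbf{d}_s\end{pmatrix}
=\Bigl(\sum_{i=1}^{s}Q_i\Bigr)\mathbf{v}(\delta)+\sum_{i=1}^{s}Q_i\mathbf{d}_i
=P\begin{pmatrix}\mathbf{d}_1\\ \vdots \\ \mathbf{d}_s\end{pmatrix},
\end{equation*}
since $Q_1+\cdots+Q_s=0$ by \eqref{eqn2.3}. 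In other words $P\circ\Phi=P|_{\tilde{\mathcal{D}}}$ as maps on $\tilde{\mathcal{D}}$.

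Finally, I would conclude injectivity by a clean diagram chase. Suppose two elements $\tilde{\epsilon}_1,\tilde{\epsilon}_2\in\tilde{\mathcal{E}}$ satisfy $P\tilde{\epsilon}_1=P\tilde{\epsilon}_2$. Writing $\tilde{\epsilon}_j=\Phi(\tilde{\delta}_j)$ with $\tilde{\delta}_j\in\tilde{\mathcal{D}}$, the identity above gives $P\tilde{\delta}_1=P\tilde{\delta}_2$; by \eqref{eqn2.2} applied to $\mathcal{D}$, we get $\tilde{\delta}_1=\tilde{\delta}_2$, and hence $\tilde{\epsilon}_1=\tilde{\epsilon}_2$ because $\Phi$ is a bijection. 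I do not anticipate a serious obstacle: the argument is essentially a one-line cancellation, and the only care needed is to invoke the well-definedness of $\mathbf{v}(\delta)$ from the uniqueness clause \eqref{eqn1.11}, which ensures $\Phi$ is a genuine bijection rather than just a correspondence.
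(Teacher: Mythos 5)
Your proposal is correct and follows essentially the same route as the paper's proof: parametrize $\tilde{\mathcal{E}}$ via the shift map of \eqref{eqn1.12}, use $Q_1+\cdots+Q_s=0$ from \eqref{eqn2.3} to cancel the uniform shift $\mathbf{v}(\delta)$, and then invoke injectivity of $P|_{\tilde{\mathcal{D}}}$ from \eqref{eqn2.2}. The only cosmetic difference is that you package the correspondence as an explicit bijection $\Phi$ with $P\circ\Phi=P|_{\tilde{\mathcal{D}}}$, whereas the paper runs the same cancellation directly on two generic elements.
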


\begin{proof}
From \eqref{eqn1.12}, there exists mapping ${\bf v}(\cdot)$ such that
\begin{equation}
\tilde{\mathcal E}=\left\{ \left.\begin{pmatrix}
{\bf v}(\delta)+{\bf d}_1 \\
       \vdots \\
      {\bf v}(\delta)+{\bf d}_s 
    \end{pmatrix}
 \right| \delta=({\bf d}_1,\cdots,{\bf d}_s)\in \mathcal D \right\}. 
\label{eqn2.20}
\end{equation}

Suppose 
\begin{equation}
P \begin{pmatrix}
 {\bf v}(\delta)+{\bf d}_1      \\
     \vdots \\
     {\bf v}(\delta)+{\bf d}_s 
  \end{pmatrix}
  = P
\begin{pmatrix}
{\bf v}(\gamma) + {\bf g}_1 \\
           \vdots \\
{\bf v}(\gamma) + {\bf g}_s                                             
\end{pmatrix},
\label{eqn2.21}
\end{equation}
where $\delta$, $\gamma\in D$ such that $\delta=({\bf d}_1,....,{\bf d}_s)$, $\gamma=({\bf g}_1,...,{\bf g}_s)$. Then
\[
P \begin{pmatrix}
 {\bf d}_1 \\ \vdots \\ {\bf d}_s  
  \end{pmatrix}
=  
  P 
\begin{pmatrix}
{\bf g}_1 \\ \vdots \\ {\bf g}_s 
\end{pmatrix}
\mbox{  by \eqref{eqn2.3} };
\]
\[
 \begin{pmatrix}
  {\bf d}_1 \\ \vdots \\{\bf d}_s
 \end{pmatrix}
= 
\begin{pmatrix}
 {\bf g}_1 \\ \vdots \\ {\bf g}_s
\end{pmatrix}
\mbox{ by {\color{black}\eqref{eqn2.2}}};
\]
   i.e.      $\delta=\gamma$
 and we get
\[
 \begin{pmatrix}
   {\bf v}(\delta)+{\bf d}_1        \\
     \vdots \\
     {\bf v}(\delta)+{\bf d}_s
 \end{pmatrix}
=
\begin{pmatrix}
    {\bf v}(\gamma) + {\bf g}_1 \\
     \vdots \\
         {\bf v}(\gamma) + {\bf g}_s
\end{pmatrix}.
\]
  Hence, $P|_{\tilde{\mathcal E}}$ is also injective.
\end{proof}

\section{\color{dblue} Perfect Compression of Matrix Partition Code}
\label{sect:perfect}

{\color{myblue} In this section, we study perfect compression. By Definition \ref{defP}, the field $\mathbb F$ is required to be finite, and a code is perfect if and only if the cardinality of the range of the mapping is the same as that of the source, i.e. $|\mathcal C|=|\mathcal S|$. Since $|\mathcal C|=|\mathbb F|^M$ {\color{nbrown}and} $|\mathcal S|=|\mathbb F|^n|\mathcal D|$ in 
\eqref{eqn1.4} and $|\mathcal D|=|\tilde {\mathcal D}|$ in \eqref{eqn1.15}, we have
\begin{equation}
|\mathbb F|^{M-n}=|\mathcal D|=|\tilde {\mathcal D}|.
\label{eqn6.0}
\end{equation}
For the sake of simplicity, we won't extend the definition to infinite 
field.
\color{black}

\subsection{Perfect Compression for $\mathcal S$ over Finite Fields}

The following theorem explains a necessary condition for a perfect {\color{dblue} Matrix Partition Code}.

\color{myblue}
\begin{thm}[Necessary Condition of Perfect Codes]
\label{thm6}
To have a perfect  compression of Matrix Partition Code, we must have an $(M-n)\times sn$ matrix $P$ 
such that
\begin{equation}
P|_{\color{myblue}\tilde{\mathcal D}\color{black}} \mbox{ is bijective.}
\label{eqn6.1}  
\end{equation}
\end{thm}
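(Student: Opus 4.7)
The plan is first to pin down $\mathrm{rank}(P)$ exactly at $M-n$ for the parent matrix $P$ of the given perfect Matrix Partition Code, then to replace $P$ by a row basis matrix of itself (which still qualifies as a parent matrix), and finally to conclude bijectivity by a pigeonhole count.

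From \eqref{eqn6.0}, perfectness forces $|\tilde{\mathcal D}| = |\mathbb F|^{M-n}$. Let $P = [Q_1 \mid \cdots \mid Q_s]$ be the given parent matrix. I would bound $\mathrm{rank}(P)$ from above and below, each time by $M-n$. The lower bound follows immediately: the image of $P|_{\tilde{\mathcal D}}$ lives in the column span of $P$, which has $|\mathbb F|^{\mathrm{rank}(P)}$ elements, so injectivity forces $|\mathbb F|^{\mathrm{rank}(P)} \ge |\tilde{\mathcal D}| = |\mathbb F|^{M-n}$. The upper bound exploits the defining relation $\sum_i Q_i = 0$: each row of $P$ has the form $(Q_1[j], \ldots, Q_s[j])$ and therefore lies inside $R_1 \oplus \cdots \oplus R_s \subseteq \mathbb F^{sn}$ (with $R_i$ the row space of $Q_i$), and moreover its components sum to zero in $\mathbb F^n$. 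Hence the row space of $P$ is contained in the kernel of the sum map
\begin{equation*}
\sigma \colon R_1 \oplus \cdots \oplus R_s \longrightarrow R_1 + \cdots + R_s, \quad (v_1, \ldots, v_s) \longmapsto v_1 + \cdots + v_s.
\end{equation*}
Since $\sigma$ is surjective, rank-nullity gives $\dim \ker \sigma = \sum_i \mathrm{rank}(Q_i) - \mathrm{rank}\begin{pmatrix} Q_1 \\ \vdots \\ Q_s \end{pmatrix}$, which by Theorem \ref{thm2.2} (equality case for Matrix Partition Codes) equals $M-n$. Combining both bounds yields $\mathrm{rank}(P) = M-n$.

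Next, I would replace $P$ by a row basis matrix $\tilde P$ of $P$. Writing $P = U\tilde P$ with $U$ of full column rank and partitioning $\tilde P = [\tilde Q_1 \mid \cdots \mid \tilde Q_s]$ into blocks of $n$ columns so that $Q_i = U\tilde Q_i$, I would check that $\sum_i \tilde Q_i = 0$ follows from $U \sum_i \tilde Q_i = \sum_i Q_i = 0$ together with the injectivity of $U$ on column vectors, while injectivity of $\tilde P|_{\tilde{\mathcal D}}$ follows from $Pd = U\tilde P d$ and the same injectivity of $U$. Hence $\tilde P$ is itself a parent matrix, now of size $(M-n) \times sn$. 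Finally, $\tilde P|_{\tilde{\mathcal D}}$ is an injection into $\mathbb F^{M-n}$ with $|\tilde P(\tilde{\mathcal D})| = |\tilde{\mathcal D}| = |\mathbb F^{M-n}|$, so its image fills the codomain and the map is bijective.

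The main technical subtlety I anticipate lies in the upper bound on $\mathrm{rank}(P)$: one has to carefully embed the row space of $P$ into the external direct sum $R_1 \oplus \cdots \oplus R_s$ (a subspace of $\mathbb F^{sn}$ of dimension $\sum_i \mathrm{rank}(Q_i)$) and then identify it with a subspace of $\ker \sigma$ before the clean formula for $\dim \ker \sigma$ can be invoked. Everything else reduces to routine linear-algebra bookkeeping and finite cardinality counting.
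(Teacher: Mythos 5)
Your proof is correct, but it takes a genuinely different route from the paper's. The paper likewise passes to a row basis matrix of the parent matrix, but it establishes surjectivity of $P|_{\tilde{\mathcal D}}$ by contradiction: if some ${\bf u}\in\mathbb F^{r}$ were missed, one finds $\delta$ with $P\delta={\bf u}$, observes that the entire equivalence class of $\delta$ avoids $\tilde{\mathcal D}$ (since $P$ is constant on classes by \eqref{eqn2.3}), adjoins that class to $\mathcal S$, and compresses the strictly larger source with the same code --- contradicting perfectness $|\mathcal S|=|\mathcal C|$; the identity $r=M-n$ then drops out at the end from \eqref{eqn6.0}. You instead front-load a rank computation: the lower bound $\operatorname{rank}(P)\ge M-n$ from injectivity plus counting, and the upper bound from embedding the row space of $P$ into the kernel of the sum map on $R_1\oplus\cdots\oplus R_s$ and invoking the equality case of Theorem \ref{thm2.2}; bijectivity is then a pigeonhole count into $\mathbb F^{M-n}$. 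Both bounds check out (the row space of $P$ does sit in $\ker\sigma$, and $\dim\ker\sigma=\sum_i\operatorname{rank}Q_i-\operatorname{rank}\bigl(\begin{smallmatrix}Q_1\\ \vdots\\ Q_s\end{smallmatrix}\bigr)=M-n$ for a Matrix Partition Code), and your row-basis reduction is sound. The trade-off: your argument leans on Theorem \ref{thm2.2}'s equality characterization, i.e.\ on the code being a Matrix Partition Code achieving equality in \eqref{new1}, whereas the paper's extension argument uses only $|\mathcal S|=|\mathcal C|$ and the deviation symmetry of the source, which is why the paper can remark that the condition extends to arbitrary perfect compressions once uniqueness is proved. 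In exchange, your route makes explicit, by direct linear algebra, why the row count must be exactly $M-n$ before finiteness is ever used.
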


\begin{proof}
Suppose we have a perfect code constructed from a parent matrix $P'$, any matrix P with the same row space of $P'$ can be viewed as the parent matrix of the code. 
\color{myblue}Indeed if row$P=$row$P'$, then null$P=$null$P'$ and row$Q_i=$row$Q'_i$ for all $i$.
Hence $P$ also satisfies \eqref{eqn2.2} and \eqref{eqn2.3}, and shares the same other components (such as $Y$, $C_i$,$\cdots$, etc.)
with $P'$ in Definition \ref{defM}.
\color{myblue}
In particular, we
let $P$ be a row basic matrix of $P'$. Let $r$ be the number of rows of $P$ so that $P$ is an $r\times sn$ matrix. We are going to show $P|_{\tilde{\mathcal D}}$ is bijective and $r=M-n$.
 
{\color{nblue} Since we have shown $P|_{\tilde{\mathcal D}}$ is injective (that is \eqref{eqn2.2}), we only need to show that $P|_{\tilde{\mathcal D}}$ is also surjective. }
Suppose $P|_{\tilde{\mathcal D}}$ is not surjective, then we can pick a ${\bf u}\in\mathbb F^r$ such that ${\bf u}\notin P(\tilde{\mathcal D})$. Since $P$ {\color{nblue} is a row basis matrix and thus} is a surjective matrix, there exists a ${\bm \delta}\in\mathbb F^{sn}$ with $P{\bm \delta}={\bf u}$. Notice that {\color{nblue} by \eqref{eqn2.3}},
\begin{equation}
P\left({\bm \delta} +\begin{pmatrix}\bf v\\ \vdots \\ \bf v\end{pmatrix}\right)=P({\bm \delta})={\bf u}\mbox{ for all }\bf v\in\mathbb F^n,
\label{eqn6.2}
\end{equation}
thus ${\bm \delta} +\begin{pmatrix}\bf v\\ \vdots \\ \bf v\end{pmatrix}\notin \tilde{\mathcal D}$ for all ${\bf v}\in\mathbb F^n$. 
Therefore we can extend $\tilde{\mathcal D}$ to $\tilde{\mathcal D}'=\tilde{\mathcal D}\cup\{{\bm \delta}\}$ and the source $\mathcal S$ to the corresponding $\mathcal S'$. Notice that 
$P|_{\tilde{\mathcal D}'}$ is injective and hence we can compress $\mathcal S'$ by the same compression. This leads to a contradiction as $|\mathcal S'|>|\mathcal S|=|\mathcal C|$. 

Finally, by \eqref{eqn6.0}, we must have $r=M-n$ if $P|_{\tilde{\mathcal D}}$ is bijective.
\end{proof}

Actually \eqref{eqn6.1} is a necessary condition for any perfect compression of the given $\mathcal S$ simply because it turns out that any perfect compression can be realized by a Matrix Partition Code. We will defer the discussion to Section \ref{sect:universality}.

\color{black}

\subsection{Necessary Conditions for Perfect Compression on Generalized Hamming Source}
Let $\mathcal L$ be a non-empty subset of $\mathbb F$ s.t. $0\notin \mathcal L$. We define
\begin{equation}
\mathcal S=\{({\bf v},\cdots,{\bf v})+(\underbrace{{\bf 0},\cdots,\lambda {\bf e}_j}_{i-th},\cdots,{\bf 0}) | {\bf v}\in\mathbb F^n, \lambda\in \mathcal L \cup \{0\}, 1\le i \le s, 1\le j\le n\}.
\label{eqn7.1}                                                         
\end{equation}

Notice that if $\mathcal L=\mathbb F-\{0\}$, then $\mathcal S$ is just the Hamming source over $\mathbb F$ (c.f. \eqref{eqn1.5}).
Therefore we call $\mathcal S$ as generalized Hamming source. Obviously it is a {\color{dblue} source with {\color{mygreen} deviation symmetry}}.

Let $s \ge 3$. We pick
\begin{equation}
\mathcal D=\{(\underbrace{{\bf 0},\cdots ,\lambda {\bf e}_j}_{i-th},\cdots, {\bf 0}) | \lambda\in \mathcal L\cup 
\{0\},  1 \le i \le s, 1 \le j \le n\}
\label{eqn7.3} 
\end{equation}
and the corresponding
\begin{equation}
\color{myblue}\tilde{\mathcal D}\color{black}=\{\lambda {\bf e}_i | \lambda\in \mathcal L\cup \{0\}, 1 \le i \le sn\}.                                              
\label{eqn7.4}
\end{equation}
We have
\begin{equation}
|\tilde{\mathcal D}|=1+|\mathcal L|sn.                                                                 
\label{eqn7.5}
\end{equation}
To have a perfect compression, we must have \eqref{eqn6.0} and hence
\begin{equation}
|\mathbb F|^{M-n}=1+|\mathcal L|sn.                                                          
\label{eqn7.6}
\end{equation}
 So, $s$ and $|\mathcal L|$ can't be multiplier of $p$, the characteristic 
of $\mathbb F$ ($|\mathbb F|=p^u$ for some positive integer $u$). If it is the case,
 then we have infinite pair of
numbers $(M,n)$ satisfying \eqref{eqn7.6} by Euler theorem.

\begin{thm}[{\color{dblue} Necessary} Conditions of Perfect {\color{dblue} Matrix Partition Codes} for Generalized Hamming Sources]
\label{thm7}
The necessary and sufficient condition for the existence of an 
$(M-n)\times sn$ matrix $P$ which is bijective when restricted to $\color{myblue}\tilde{\mathcal D}\color{black}$ is that 
$|\mathbb F|-1$ is divisible by $|\mathcal L|$ 
and $\exists$ distinct $a_1, a_2,\cdots ,a_k \in \mathbb F$, with $k=(|\mathbb F|-1)/|\mathcal L|$, such
that 
\begin{equation}
\mathbb F-\{0\}=\{a_i \lambda | 1 \le i \le k; \lambda\in \mathcal L\}.         
\label{eqn7.9}
\end{equation}
\end{thm}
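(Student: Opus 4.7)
My plan is to translate the bijectivity of $P|_{\tilde{\mathcal D}}$ into a partition statement about the multiplicative group $\mathbb F^\ast=\mathbb F-\{0\}$, from which the claimed equivalence falls out. Writing the columns of $P$ as $P_1,\ldots, P_{sn}$, the image $P(\tilde{\mathcal D})$ equals $\{{\bf 0}\}\cup\bigcup_{i=1}^{sn}\mathcal L\cdot P_i$, so bijectivity onto $\mathbb F^{M-n}$ amounts to: every $P_i\neq{\bf 0}$, the sets $\mathcal L\cdot P_i$ are pairwise disjoint, and together they exhaust $\mathbb F^{M-n}\setminus\{{\bf 0}\}$. Since $\mathcal L\cdot P_i$ is contained in the $1$-dimensional subspace (``line'') spanned by $P_i$, contributions from columns on different lines cannot collide, so I can analyse one line at a time.

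For necessity, I will fix a line $\ell\subseteq\mathbb F^{M-n}$, pick a nonzero representative $v_\ell\in\ell$, and write the columns of $P$ lying on $\ell$ as $a\,v_\ell$ with $a$ ranging over some $A_\ell\subseteq\mathbb F^\ast$. The image contribution from $\ell$ is then $\bigcup_{a\in A_\ell}(a\mathcal L)v_\ell$, so disjoint covering of $\ell\setminus\{{\bf 0}\}$ forces $\bigsqcup_{a\in A_\ell}a\mathcal L=\mathbb F^\ast$, and bijective covering of all of $\mathbb F^{M-n}\setminus\{{\bf 0}\}$ forces such a partition to be achievable on at least one line (in fact, on every line that meets a column). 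From any single such partition I read off $|\mathcal L|\mid|\mathbb F|-1$, and with $k=(|\mathbb F|-1)/|\mathcal L|$ I take the $a_i$'s to enumerate $A_\ell$, obtaining exactly \eqref{eqn7.9}.

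For sufficiency I reverse the construction: given \eqref{eqn7.9} together with the cardinality constraint \eqref{eqn7.6}, a direct count shows that the number of lines in $\mathbb F^{M-n}$ times $k$ equals $sn$, so I pick one representative $v_\ell$ per line and let $P$ be the matrix whose $sn$ columns are the vectors $a_i v_\ell$ for $1\le i\le k$ and $\ell$ ranging over all lines, in any order. Since $\mathcal L\cdot(a_i v_\ell)=(a_i\mathcal L)v_\ell$ and \eqref{eqn7.9} says the $a_i\mathcal L$ partition $\mathbb F^\ast$, the images of the nonzero elements of $\tilde{\mathcal D}$ cover $\mathbb F^{M-n}\setminus\{{\bf 0}\}$ exactly once, so $P|_{\tilde{\mathcal D}}$ is bijective. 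The step I expect to require the most care is the necessity side---specifically, justifying that \emph{every} line through the origin must meet a column of $P$ (otherwise points on that line are missed), since this is what feeds the counting argument and produces the single partition of $\mathbb F^\ast$ whose block sizes and translates the $a_i$'s are supposed to witness.
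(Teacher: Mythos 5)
Your proposal is correct and matches the paper's own argument essentially step for step: the sufficiency construction (one representative per line of $\mathbb F^{M-n}$, scaled by the $a_i$) is exactly the paper's $P_{i+(j-1)k}=a_i{\bf v}_j$, and your line-by-line necessity argument is the paper's analysis of the maximal index set $\mathcal J$ of columns proportional to ${\bf P}_1$. The point you flag as delicate is easily closed: surjectivity forces every line to meet a column (a point on a line can only be hit by a column on that line), and for necessity a single such line already yields the partition of $\mathbb F-\{0\}$.
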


\begin{proof}
If $\mathcal L$ fulfills the conditions, 
then $sn/k=(|\mathbb F|^{M-n}-1)/(|\mathbb F|-1)$ by \eqref{eqn7.6}. Thus $sn/k$ is an integer. Let
$\{{\bf v}_1, {\bf v}_2, \cdots ,{\bf v}_{sn/k}\}$  be a subset of $\mathbb F^{M-n}$ that each element is a 
not multiplier of the 
other.
Define $P$ through its column
\begin{equation}
{\bf P}_{i+(j-1)k}=a_i {\bf v}_j,\mbox{ for } 1 \le i \le k, 1 \le j \le sn/k.                                 
\label{eqn7.10}
\end{equation}
Then we will show $P|_{\color{myblue}\tilde{\mathcal D}\color{black}}$ is injective. Suppose
\begin{equation}
P(\lambda_1 {\bf e}_{i+(j-1)k})=P(\lambda_2 {\bf e}_{b+(c-1)k}),                           
\label{eqn7.11}
\end{equation}
where $\lambda_1, \lambda_2\in \mathcal L\cup\{0\}; i,b \le  k; j,c \le sn/k$.
Then 
\begin{equation}
\lambda_1a_i{\bf v}_j=\lambda_2a_b{\bf v}_c,                                             
\label{eqn7.12}
\end{equation}
which gives $\lambda_1=\lambda_2=0$ that yields $\lambda_1 {\bf e}_{i+(j-1)k}=\lambda_2 {\bf e}_{b+(c-1)k}$ immediately or $j=c$ with $\lambda_1\neq0\neq\lambda_2$. So let assume we are in the second case. By counting the number of elements in both sides of 
\eqref{eqn7.9}, 
we conclude that every nonzero element of $\mathbb F$ is a product of 
a unique $\lambda$ and a unique $a_j (\lambda \in \mathcal L, 1 \le j \le k)$. Thus, 
we get $\lambda_1=\lambda_2$ and $i=b$. Hence $P|_{\color{myblue}\tilde{\mathcal D}\color{black}}$ is injective. By 
\eqref{eqn6.0}, $P|_{\color{myblue}\tilde{\mathcal D}\color{black}}$ is bijective.

Conversely, let $\mathcal J$ be a maximal subset of the index set 
$\{ 1,2, \cdots, sn \} $ such that ${\bf P}_j$, the j-th column of $P$, is a multiplier of ${\bf P}_1$ for
 all $j\in\mathcal J$. So if ${\bf P}_i$ is a multiplier of ${\bf P}_1$, then $i\in\mathcal J$. 
Let ${\bf P}_j=a_j{\bf P}_1$ for all $j\in\mathcal J$. The columns of $P$ must be nonzero 
and distinct from each other, otherwise $P|_{\color{myblue}\tilde{\mathcal D}\color{black}}$ can't be injective. It follows 
that $a_j$ are nonzero for all $j\in\mathcal J$ and distinct from each other. 
Then 
the bijectivity of 
$P|_{\color{myblue}\tilde{\mathcal D}\color{black}}$ implies $\forall b\in\mathbb F-\{0\}, \exists$ unique $\lambda 
{\bf e}_i\in 
\color{myblue}\tilde{\mathcal D}\color{black}$ (c.f. \eqref{eqn7.4}) with $\lambda\in\mathcal L, 1\le i\le sn$, such that $P( \lambda  {\bf e}_i)=b{\bf P}_1$. 
Hence $i\in\mathcal J$ and $\lambda a_i=b$. By counting, we get $(|\mathbb F|-1)/|\mathcal 
L|$ is an integer and \eqref{eqn7.9} is fulfilled with $k=|\mathcal J|$.
 \end{proof}
We remark that {\color{nblue} Theorem \ref{thm7} only characterizes necessary conditions since even if $P$ is bijective when restricted to $\tilde{\mathcal D}$,  it does not mean that we will have a perfect compression of Matrix Partition Codes (see \cite{ChengM:DCC2010}). 
However, it is not the case when $s=2$.
Here we give some examples of perfect compression:

\begin{example}
\label{example1}
$\mathbb F=\mathbb Z_{11}, \mathcal L=\mathbb Z_{11}-\{0\}$ (Hamming source over $\mathbb Z_{11}$), $n=4$ and $s=3$:
\begin{align}
H_1 & =Q_1=\begin{pmatrix}
1 & 1 & -2 &  -2 \\
0 & 2 & -1 & -7
               \end{pmatrix}, \\
H_2 & =Q_2=
\begin{pmatrix}
0 & 9 & 1 & 1 \\
1 & 5 & 5 & 8
\end{pmatrix}, \\
H_3 & =Q_3=
\begin{pmatrix}
-1 & 1 & 1 & 1 \\
-1 & 4 & 7 & -1
\end{pmatrix}.
\end{align}
Notice that each nonzero vector of $\mathbb F^2$ has one and only one multiplier 
as a column vector of 
$
P= 
[\mbox{
\begin{tabular}{ @{} c @{} ;{2pt/2pt} @{} c @{} ;{2pt/2pt} @{} c @{}}
 $\hspace{-0.1cm} Q_1$ & $\hspace{0.05cm} Q_2 \hspace{0.02cm}$ & $\hspace{0.05cm} Q_3 \hspace{-0.1cm}$
\end{tabular}
}].
$
\end{example}

\begin{example}
$\mathbb F=\mathbb Z_5, \mathcal L=\{1,-1\}, n=4$ and $s=3$:
\begin{align}
H_1 & =Q_1= \begin{pmatrix}
1 & 0 & 1 & 1 \\
0 & 2 & 2 & -2
                \end{pmatrix}, \\
H_2 & =Q_2=
\begin{pmatrix}
0 & 2 & -2 & 2 \\
1 & 0 & -1 & -1
\end{pmatrix}, \\
H_3 & =Q_3=
\begin{pmatrix}
-1 & -2 & 1 & 2 \\   
-1 & -2 & -1 & -2
\end{pmatrix},
\end{align}
Notice that $\{a_1,a_2\}=\{1,2\}$ (c.f.\eqref{eqn7.9}),\newline $\{{\bf v}_1,{\bf v}_2,{\bf v}_3,{\bf v}_4,{\bf v}_5,{\bf v}_6\}= 
\left\{ 
\begin{pmatrix} 1 \\ 0 \end{pmatrix},
\begin{pmatrix} 0 \\ 1 \end{pmatrix},
\begin{pmatrix} -1 \\ -1\end{pmatrix},
\begin{pmatrix} 1 \\ 2 \end{pmatrix},
\begin{pmatrix} -2 \\ -1 \end{pmatrix},
\begin{pmatrix} 1 \\ -1 \end{pmatrix}
\right\}$ (c.f.\eqref{eqn7.10}).
\end{example}

\begin{example}
\label{example3}
$\mathbb F=\mathbb Z_5, \mathcal L=\{1\}, n=6, s=4$:
\begin{align}
H_1 &=Q_1=
\begin{pmatrix}
1 & 0 & 1 & 1 &  2 &  2 \\
0 & 1 & 1 & 2 & -2 & -1
\end{pmatrix}, \\
H_2 &=Q_2=
\begin{pmatrix}
-1 & 0 & 2 & -1 & 2 & -1 \\
 0 & 2 & 2 & -2 & 1 & 1
\end{pmatrix}, \\
H_3 &=Q_3=
\begin{pmatrix}
2 & 0 & -2 &  1 & -2 & 1 \\
0 & -1 & -2& -2 & 2 & -1
\end{pmatrix}, \\
H_4 &=Q_4=
\begin{pmatrix}
-2 & 0 & -1 & -1 & -2 & -2 \\
 0 &-2 & -1 & 2 & -1 & 1
\end{pmatrix}.
\end{align}
The matrix 
$P=[\mbox{
\begin{tabular}{ @{} c @{} ;{2pt/2pt} @{} c @{} ;{2pt/2pt} @{} c @{} ;{2pt/2pt} @{} c @{}}
 $\hspace{-0.1cm} Q_1$ & $\hspace{0.05cm} Q_2 \hspace{0.02cm}$ & $\hspace{0.05cm} Q_3 \hspace{0.02cm}$ & $\hspace{0.05cm} Q_4 \hspace{-0.1cm}$
\end{tabular}
}]$
consists of all nonzero vectors of $\mathbb F^2$ 
without repetition.
\end{example}

\begin{example}
$\mathbb F = \mbox{GF}(4)=\mathbb Z_2(\alpha)$ with $\alpha^2+\alpha+1=0$;
 $\mathcal L = \{1, \alpha, \alpha+1 \}$, $n=7, s=3$ 
\begin{align}
 T & = \begin{pmatrix}
        0 & 0 & 0 & 0 & 0 & 0 & 1
       \end{pmatrix},
 \\
 Q_1  & =
 \begin{pmatrix}
 1 & 1 & 0 & 0 & 1 & \alpha & \alpha \\
  1 & 0 & 1 & 0 & \alpha & 1 & \alpha \\
  0 & 0 & 0 & 1 & \alpha & \alpha & 1
 \end{pmatrix}, \\
 H_1 & = \begin{pmatrix}
          T \\ Q_1
         \end{pmatrix},
\\
 H_2 & = Q_2  = 
 \begin{pmatrix}
  1 & \alpha+1  & 0 & 1 & \alpha & \alpha+1  & \alpha+1  \\
  0 &  1 & \alpha+1  &0 & 1 & \alpha+1  & 1 \\
  1 & 0 &  1& \alpha+1  & 1 & 1 & \alpha+1 
 \end{pmatrix},
\\
 H_3 & = Q_3  = 
 \begin{pmatrix}
 0 & \alpha & 0 & 1 & \alpha+1  & 1 & 1\\
 1 & 1 & \alpha & 0 & \alpha+1  & \alpha & \alpha+1  \\
 1 & 0 & 1 & \alpha & \alpha+1  & \alpha+1  & \alpha
\end{pmatrix}.
\end{align}
It is a Hamming source over 
GF$(4)$. Each nonzero vector of $\mathbb F^3$ has one and only one multiplier 
as a column vector of 
$
P= 
[\mbox{
\begin{tabular}{ @{} c @{} ;{2pt/2pt} @{} c @{} ;{2pt/2pt} @{} c @{}}
 $\hspace{-0.1cm} Q_1$ & $\hspace{0.05cm} Q_2 \hspace{0.02cm}$ & $\hspace{0.05cm} Q_3 \hspace{-0.1cm}$
\end{tabular}
}].
$
Besides, $(Q_1, 
\begin{pmatrix}
          T \\ Q_2
         \end{pmatrix}, Q_3)$ and $(Q_1, Q_2, 
\begin{pmatrix}
          T \\ Q_3
         \end{pmatrix})$ are also perfect compressions.
\end{example}

Let $s=2$. We rewrite \eqref{eqn7.1} as
\begin{equation}
\mathcal S=\{({\bf v},{\bf v})+({\bf 0}, a {\bf e}_i) | {\bf v}\in \mathbb F^n, 1 \le i\le n, a\in \mathcal L\cup\mathcal L_{-}\cup\{0\}\},       
\label{eqn7.13}
\end{equation}
where $\mathcal L_{-}=\{-a | a\in \mathcal L\}$. We have
\begin{equation}
\mathcal D=\{({\bf 0}, a {\bf e}_i)) | 1 \le i \le n, a\in \mathcal L\cup\mathcal L_{-}\cup\{0\}\}.                            
\label{eqn7.14}
\end{equation}
The corresponding
\begin{equation}
\color{myblue}\tilde{\mathcal D}\color{black}=\{a {\bf e}_i | a\in \mathcal L\cup\mathcal L_{-}\cup\{0\}, n<i \le 2n\}.
\label{eqn7.15}
\end{equation}
Then 
\begin{equation}
|\tilde{\mathcal D}|=1+n|\mathcal L\cup\mathcal L_{-}|.                                                          
\label{eqn7.16}
\end{equation}
To have a perfect compression, we must have \eqref{eqn6.0}, i.e.
\begin{equation}
|\mathbb F|^{M-n}=1+n|\mathcal L\cup\mathcal L_{-}|.                                                        
\label{eqn7.17}
\end{equation}
Now we are seeking an $(M-n) \times 2n$ matrix $P$ to be bijective when restricted to $\color{myblue}\tilde{\mathcal D}\color{black}$.
Since the first $n$ columns in $P$ virtually play no role on $\color{myblue}\tilde{\mathcal D}\color{black}$, it can be arbitrary. Let
\begin{equation}
P= 
[\mbox{
\begin{tabular}{ @{} c @{} ;{2pt/2pt} @{} c @{} }
 $\hspace{-0.1cm} -Q_2 \hspace{0.03cm}$ &  $\hspace{0.08cm} Q_2 \hspace{-0.1cm}$
\end{tabular}
}].
\label{eqn7.20}
\end{equation}
where $Q_2$ is an $(M-n) \times n$ matrix so that $P$ satisfies \eqref{eqn2.3}. Let
\begin{equation}
\color{myblue}\tilde{\mathcal D}\color{black}'=\{a {\bf e}_i | a\in \mathcal L\cup\mathcal L_{-}\cup\{0\}, 1 \le i \le n \},                                   
\label{eqn7.21}
\end{equation}
which is just the nontrivial segment of the $\color{myblue}\tilde{\mathcal D}\color{black}$ in \eqref{eqn7.15}.

\begin{thm}[{\color{dblue} Necessary and Sufficient Conditions of Perfect {\color{dblue} Matrix Partition Codes} for Generalized Hamming Sources with $s=2$}]
\label{thm7.2}
The following statements imply each other:
\begin{itemize}
 \item 
We have  an $(M-n) \times 2n$ matrix $P$ which is bijective when restricted to $\color{myblue}\tilde{\mathcal D}\color{black}$.
\item
 We have  an $(M-n) \times n$ matrix $Q_2$ which is bijective when restricted to
$\color{myblue}\tilde{\mathcal D}\color{black}'$.
\item
 $\exists$ distinct $a_1, a_2,\cdots ,a_k \in \mathbb F$, with $k=(|\mathbb F|-1)/|\mathcal L\cup \mathcal L_{-}|$,
such that 
$\mathbb F-\{0\}=\{a_i\lambda | 1 \le i \le k; \lambda\in \mathcal L\cup\mathcal L_{-}\}$.                 
\end{itemize}
\end{thm}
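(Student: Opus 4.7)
The plan is to establish the chain $1 \Leftrightarrow 2 \Leftrightarrow 3$, reducing the theorem to the single-matrix analogue of Theorem \ref{thm7} via the observation that $\tilde{\mathcal{D}}$ in \eqref{eqn7.15} has its first $n$ coordinates identically zero.

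For $1 \Leftrightarrow 2$, I would begin by writing any $(M-n)\times 2n$ matrix as $P = [P_1 \mid P_2]$ with each $P_i$ an $(M-n)\times n$ block. Since every vector in $\tilde{\mathcal{D}}$ has zeros in positions $1,\ldots,n$, we have $P(a\mathbf{e}_{n+i}) = P_2(a\mathbf{e}_i)$ for all admissible $a$ and $1 \le i \le n$, giving a canonical bijection between the image $P(\tilde{\mathcal{D}})$ and $P_2(\tilde{\mathcal{D}}')$. Hence $P|_{\tilde{\mathcal{D}}}$ is bijective iff $P_2|_{\tilde{\mathcal{D}}'}$ is. For the direction $2 \Rightarrow 1$, given $Q_2$ as in statement 2, the canonical choice $P = [-Q_2 \mid Q_2]$ of \eqref{eqn7.20} both satisfies the partition condition \eqref{eqn2.3} and inherits the bijectivity from $Q_2$.

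For $3 \Rightarrow 2$, I would imitate the forward construction in Theorem \ref{thm7}. Relation \eqref{eqn7.17} combined with the divisibility $(|\mathbb{F}|-1) = k\,|\mathcal{L} \cup \mathcal{L}_{-}|$ forces $n/k = (|\mathbb{F}|^{M-n}-1)/(|\mathbb{F}|-1)$ to be a positive integer, so I can choose $n/k$ pairwise non-proportional nonzero vectors $\mathbf{v}_1,\ldots,\mathbf{v}_{n/k}$ in $\mathbb{F}^{M-n}$ (one representative per projective line) and define the columns of $Q_2$ by $(Q_2)_{i+(j-1)k} = a_i\mathbf{v}_j$, exactly as in \eqref{eqn7.10}. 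The injectivity computation from \eqref{eqn7.11}--\eqref{eqn7.12} then transfers verbatim with $\lambda_1,\lambda_2$ now ranging over $\mathcal{L} \cup \mathcal{L}_{-} \cup \{0\}$, and a cardinality count (again from \eqref{eqn7.17}) upgrades injectivity to bijectivity. For $2 \Rightarrow 3$, I would mirror the converse part of Theorem \ref{thm7}: let $\mathcal{J}$ be a maximal set of indices $j$ with $(Q_2)_j = a_j(Q_2)_1$ (setting $a_1 = 1$); bijectivity of $Q_2|_{\tilde{\mathcal{D}}'}$ forces all columns of $Q_2$ to be nonzero and the $a_j$ to be distinct, and then matching each nonzero $b \in \mathbb{F}$ with its unique preimage $\lambda\mathbf{e}_i$ of $b(Q_2)_1$ shows $i \in \mathcal{J}$ and $\lambda a_i = b$, which after counting yields the required factorization with $k = |\mathcal{J}|$.

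There is no real obstacle beyond bookkeeping: the crucial conceptual step is the reduction $1 \Leftrightarrow 2$, where the peculiar support of $\tilde{\mathcal{D}}$ in the $s=2$ case collapses the two-block parent matrix to a single-block problem. Once this reduction is made, the equivalence $2 \Leftrightarrow 3$ is Theorem \ref{thm7} applied to a single matrix with $\mathcal{L}$ replaced by $\mathcal{L} \cup \mathcal{L}_{-}$, and the same multiplicative-representative argument controls both directions. The only subtlety to verify is that $\mathcal{L} \cup \mathcal{L}_{-}$ (rather than $\mathcal{L}$) is the correct ``coefficient pool,'' which is forced by \eqref{eqn7.15}: when $s=2$ the representative $({\bf 0}, a\mathbf{e}_i)$ is freely equivalent to $(-a\mathbf{e}_i, {\bf 0})$, so both $a$ and $-a$ must be accommodated.
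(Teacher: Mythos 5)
Your proposal is correct and follows exactly the route the paper intends: the paper's proof is literally ``similar to the proof of Theorem \ref{thm7},'' and your reduction of the two-block parent matrix to the single block $Q_2$ (using that $\tilde{\mathcal D}$ in \eqref{eqn7.15} is supported on the last $n$ coordinates, with $[-Q_2\,|\,Q_2]$ recovering \eqref{eqn2.3}) is precisely the observation the paper makes just before the theorem via \eqref{eqn7.20}. The remaining equivalence is Theorem \ref{thm7}'s argument with $\mathcal L$ replaced by $\mathcal L\cup\mathcal L_{-}$ and the count \eqref{eqn7.17} in place of \eqref{eqn7.6}, exactly as you carry it out.
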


\begin{proof}
 Similar to the {\color{myblue} proof of Theorem \ref{thm7} }. 
\end{proof}

Once we have the $Q_2$ in the above theorem, it can be shown that the pair
$\begin{pmatrix}
 G_1 \\
 Q_2
\end{pmatrix}$,
$\begin{pmatrix}
 G_2 \\
 Q_2
\end{pmatrix}$ is a {\color{dblue} Matrix Partition Code} and a perfect compression whenever
$\begin{pmatrix}
 G_1 \\
 G_2 \\
  Q_2
\end{pmatrix}$ forms an $n\times n$ invertible matrix. Notice that
$(I_{n\times n},Q_2)$ is also a perfect compression, which is 
a {\color{dblue} Matrix Partition Code} with certain $U_1$.


\begin{example}
$\mathbb F=GF(4)$ with $\alpha^2+\alpha+1=0$, $\mathcal L=\mathbb F-\{0\}$, $n=5$, $s=2$.
Let 
{\color{dblue}
\begin{align}
Q_2=\begin{pmatrix}
  1 & 0 & 1 & 1 &  1 \\
                0 &1 &1 &\alpha &1+\alpha         
         \end{pmatrix}.
\end{align}}
The following pairs of matrices are all perfect compression:
\begin{itemize}
 \item 
$I_{5 \times 5}$ and $Q_2$;
\item
$
\begin{pmatrix}
\begin{smallmatrix}
0  & 0 & 1 & 0 & 0 \\
 0  & 0 &  0 & 1 & 0 \\
 0 & 0 & 0 & 0 & 1 
\end{smallmatrix} \vspace{0.05cm} \\ 
\hdashline[1pt/5pt]
 Q_2
\end{pmatrix}
$
and $Q_2$;
\item
$
\begin{pmatrix}
\begin{smallmatrix}
0  & 0 & 1 & 0 & 0 \\
 0 & 0 & 0 & 0 & 1 
\end{smallmatrix} \vspace{0.05cm} \\ 
\hdashline[1pt/5pt]
 Q_2
\end{pmatrix}
$
and 
$
\begin{pmatrix}
\begin{smallmatrix}
 0 & 0 & 0 & 1 & 0 
\end{smallmatrix} \vspace{-0.05cm} \\ 
\hdashline[1pt/5pt]
 Q_2
\end{pmatrix}
$.

\end{itemize}

%
 
\end{example}

\color{myblue}
\subsection{Examples beyond Generalized Hamming Source}
Here we will provide some examples where the sources are not generalized Hamming. The first two examples illustrate that one can modify a given compression when the original source has been deformed.

\begin{example}
$\mathbb F=\mathbb Z_{11}$, $s=5$, $n=6.$

$\mathcal D=\{(\underbrace{{\bf 0},\cdots,{\bf 0},a{\bf e}_j}_{i-th},{\bf 0},\cdots,{\bf 0})\hspace{0.1cm} | \hspace{0.1cm}a \in\mathbb F, 1\le i\le3, 1\le j \le 4\}.$

Notice that this is just the Hamming source (c.f. Example \ref{example1}) trapping in a bigger space. So we can modify the previous setting to obtain a new perfect compression. 
\begin{align}
H_1 & =Q_1=\begin{pmatrix}
1 & 1 & -2 &  -2  & 0 & 0\\
0 & 2 & -1 & -7 & 0 & 0
               \end{pmatrix}, \\
H_2 & =Q_2=
\begin{pmatrix}
0 & 9 & 1 & 1 & 0 & 0\\
1 & 5 & 5 & 8 & 0 & 0
\end{pmatrix}, \\
H_3 & =Q_3=
\begin{pmatrix}
-1 & 1 & 1 & 1&0 & 0 \\
-1 & 4 & 7 & -1 & 0 & 0
\end{pmatrix},\\
Q_4&=Q_5=
\begin{pmatrix}
0 & 0 & 0 & 0&0 & 0 \\
0 & 0 & 0 & 0 & 0 & 0
\end{pmatrix},\\
\hspace{0.5cm}& T=
\begin{pmatrix}
0 & 0 & 0 & 0&1 & 0 \\
0 & 0 & 0 & 0 & 0 & 1
\end{pmatrix}.
\end{align}
And we chose $H_4=\begin{pmatrix}0 & 0 & 0 & 0 & 1 & 0\end{pmatrix}$, 
$H_5=\begin{pmatrix}0 & 0 & 0 & 0 & 0 & 1\end{pmatrix}$.
\end{example}

\begin{example}
$\mathbb F=\mathbb Z_{11}$, $s=4$, $n=5.$

\begin{align}
\mathcal D =& \{(\underbrace{{\bf 0},\cdots,{\bf 0},a{\bf e}_j}_{i-th},{\bf 0},\cdots,{\bf 0})\hspace{0.1cm} | \hspace{0.1cm}a \in\mathbb F, 1\le i\le 3, 1\le j \le 5\}\\
   & -\{(a{\bf e}_1,{\bf 0},{\bf 0},{\bf 0}),\hspace{0.1cm}({\bf 0},a{\bf e}_1,{\bf 0},{\bf 0}),\hspace{0.1cm}({\bf 0},{\bf 0},a{\bf e}_2,{\bf 0})\hspace{0.1cm}|\hspace{0.1cm}a\in\mathbb F\}.
\end{align}
 This is the source of the Example \ref{example1} with some shifting. We extend $P=$ \begin{tabular}{ @{} c @{} ;{2pt/2pt} @{} c @{} ;{2pt/2pt} @{} c @{} ;{2pt/2pt} @{} c @{}}
 $\hspace{-0.1cm} [Q_1$ & $\hspace{0.05cm} Q_2 \hspace{0.02cm}$ & $\hspace{0.05cm} Q_3 \hspace{0.02cm}$& $\hspace{0.05cm} Q_4 ]$
\end{tabular} accordingly.
\begin{align}
H_1 & =Q_1=\begin{pmatrix}
1 & 1 & -2 &  -2 &1 \\
0 & 2 & -1 & -7 & 0
               \end{pmatrix}, \\
H_2 & =Q_2=
\begin{pmatrix}
0 & 9 & 1 & 1 & 0\\
1 & 5 & 5 & 8 & 1
\end{pmatrix}, \\
H_3 & =Q_3=
\begin{pmatrix}
-1 & 1 & 1 & 1 & 1\\
-1 & 4 & 7 & -1 & 4
\end{pmatrix},
\end{align}
$Q_4=-Q_1-Q_2-Q_3=
\begin{pmatrix}
0& 0 & 0 & 0 & -2\\
0 & 0 & 0 & 0 & -5
\end{pmatrix}$, and $H_4=\begin{pmatrix}0&0&0&0&1\end{pmatrix}$, a row basis matrix of $Q_4$.

\end{example}

In the third example, we make use of an existing code to create a compression for another source, where $\mathcal{D}$ has been changed almost completely. The old code works as long as the parent matrix $P$ still fulfills \eqref{eqn2.2} with the new $\tilde{\mathcal D}$. 
If the existing one is a perfect compression and $P|_{\mbox{new }\tilde{\mathcal D}}$ is bijective, then the
compression is also perfect  for the new source simply by counting.

\begin{example}
$\mathbb F=\mathbb Z_{5}$, $n=6$, $s=4.$
\begin{align}
\mathcal D = & \{(\pm {\bf e}_j, {\bf 0}, {\bf 0}) | 1\le j \le 6\}\cup\{({\bf 0}, {\bf e}_j, {\bf 0}, {\bf 0})|  j \in\{2,3,5, 6\}\}\cup\nonumber\\ 
                    &\{({\bf e}_3,{\bf 0},{\bf e}_1,{\bf e}_3), ({\bf e}_1+{\bf e}_2, {\bf 0}, {\bf e}_3, {\bf e}_3),  ({\bf 0}, {\bf e}_1+{\bf e}_2+{\bf e}_4,{\bf 0},{\bf 0})\}\cup\nonumber\\
                    &\{(3{\bf e}_1, 2{\bf e}_2,-2{\bf e}_1,{\bf e}_4+{\bf e}_2), ({\bf 0},{\bf 0},{\bf 0},{\bf e}_4),({\bf e}_5,{\bf e}_6,{\bf 0},{\bf 0})\}\cup\nonumber\\
                    &\{({\bf 0},{\bf 0},{\bf e}_4,{\bf 0}),({\bf 0},{\bf 0},{\bf 0},{\bf e}_2), ({\bf 0},{\bf 0},{\bf 0},{\bf 0})\}.
\end{align}
The compression $(H_1,H_2,H_3,H_4)$ is the same as Example \ref{example3}.

\end{example}

\color{black}

\section{Uniqueness of {\color{dblue} Matrix Partition Codes}}
\label{sect:universality}

{\color{myblue} In this section, we will show that {\color{dblue} Matrix Partition Codes} are unique in the sense that any linear-optimal or perfect compression is a {\color{dblue} Matrix Partition Code}. 

\subsection{Null Space View}

We will first study the null spaces of lossless compression simply
because null spaces of coding matrices determines injectivity entirely.

\begin{lem}

\label{lem8.1}
If $(H_1,\cdots,H_s)$ is a lossless compression of a {\color{dblue} source with {\color{mygreen} deviation symmetry}}, then we have
$\mbox{null} H_1\cap\cdots\cap \mbox{null} H_s=\{{\bf 0}\}$.
 
\end{lem}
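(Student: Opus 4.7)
The plan is to exploit the defining symmetry of the source together with the injectivity requirement of lossless compression. Recall that by Definition \ref{def:szeto}, any source $\mathcal S$ with deviation symmetry is closed under uniform shifts: for every $\sigma \in \mathcal S$ and every $\mathbf{v} \in \mathbb F^n$, the tuple $\sigma + (\mathbf{v}, \cdots, \mathbf{v})$ also lies in $\mathcal S$. This invariance is the only structural ingredient we need.

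First, I would take an arbitrary $\mathbf{w} \in \mbox{null}H_1 \cap \cdots \cap \mbox{null}H_s$ and pick any $\sigma = (\mathbf{x}_1, \cdots, \mathbf{x}_s) \in \mathcal S$ (implicitly assuming $\mathcal S$ is non-empty, otherwise the lemma is vacuous). Applying the deviation symmetry with $\mathbf{v} = \mathbf{w}$ produces the new source tuple $\sigma' = (\mathbf{x}_1 + \mathbf{w}, \cdots, \mathbf{x}_s + \mathbf{w}) \in \mathcal S$.

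Next, I would compare the two encoded outputs. For each $i$, using $H_i \mathbf{w} = \mathbf{0}$, we get $H_i(\mathbf{x}_i + \mathbf{w}) = H_i \mathbf{x}_i$, so $Enc(\sigma) = Enc(\sigma')$. Since $(H_1,\cdots,H_s)|_{\mathcal S}$ is injective by the lossless hypothesis, we conclude $\sigma = \sigma'$, which forces $\mathbf{w} = \mathbf{0}$.

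There is essentially no hard step here; the lemma is a one-line consequence of combining closure under uniform shifts with injectivity. The only mild subtlety is ensuring $\mathcal S \neq \emptyset$, which is a standing assumption throughout the paper, and making sure that the shift vector used to test injectivity is precisely $\mathbf{w}$ (thereby linking the common null space to the identification $\sigma = \sigma'$).
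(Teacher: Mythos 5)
Your proof is correct and is essentially identical to the paper's: both pick an arbitrary element of the common null space, shift a source tuple by it using deviation symmetry, and invoke injectivity of the restricted encoding map to force the shift to be zero. No issues.
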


\begin{proof}
Let ${\bf v}\in \mbox{null} H_1\cap\cdots \cap \mbox{null} H_s$. Pick a $\sigma \in \mathcal S$. We have 
$\sigma + ({\bf v},\cdots,{\bf v}) \in \mathcal S$ (c.f. 
{\color{nblue} Definition \ref{def:szeto}}).
Moreover,
$(H_1,\cdots,H_s)(\sigma) = (H_1,\cdots,H_s)(\sigma + ({\bf v},\cdots,{\bf v}))$.
As $(H_1,\cdots,H_s)$ is a compression, $\sigma=\sigma+({\bf v},\cdots,{\bf v})$ that ${\bf v}={\bf 0}$.
 \end{proof}

\color{myblue}
Thus if $\mbox{null} H_1\cap\cdots \cap\mbox{null} H_{s-1}\supset K\ne\{{\bf 0}\}$, we have null$H_s\cap K=\{{\bf 0}\}$. In this situation, the following theorem tells us that we can build up another compression $H'_1,\cdots, H'_s$ merely by shifting the $K$ from the one of the first $s-1$ terminals to the last terminal.
\color{black}

\begin{thm}[Nullspace Shifting]
\label{thm8.1}
Suppose $(H_1,\cdots,H_s)$ is a compression for $\mathcal S$. Let $\pi$ be a permutation of the 
index set $\{1,2,\cdots,s\}$ that
\begin{align}
\left\{
\begin{matrix}
\mbox{null} H_{\pi(i)}& = & K\oplus N_i, & \mbox{ for $1\le i<s$,} \\               
\mbox{null} H_{\pi(s)}& = & N_s,                                 &                    
 \end{matrix}
\right.
 \label{eqn8.1}
\end{align}
where $K, N_i$ are subspaces of $\mathbb F^n$. 
Then $(H'_1,\cdots,H'_s)$ is also a compression for $\mathcal S$ if 
\begin{eqnarray}
\left\{
\begin{matrix}
\mbox{null} H'_{\pi(1)}& = &N_1,   & \\
 \mbox{null} H'_{\pi(i)}& =& K\oplus N_i, & \mbox{ for $1<i \le s$}.                    
 \end{matrix}
\right.
\label{eqn8.2}
\end{eqnarray}

Furthermore if $H_i$ and $H'_i$ are surjective for all $i$, then the two compressions 
have the {\color{dblue} same compression {\color{mbrown}sum-ratio}}. For finite field that if $(H_1,\cdots,H_s)$ is a perfect compression, then
$(H'_1,\cdots,H'_s)$ is also a perfect compression.
\end{thm}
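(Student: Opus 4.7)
The plan is to prove that $(H'_1,\ldots,H'_s)$ is a compression by contradiction, using the characterization \eqref{eqn2.7b} together with the fact that $\mathcal S_+$ is invariant under diagonal shifts. Because $\mathcal S$ is a source with deviation symmetry (Definition \ref{def:szeto}), $\mathcal S+(c,\ldots,c)=\mathcal S$ for every $c\in\mathbb F^n$; taking differences, any tuple in $\mathcal S_+$ may be shifted by an arbitrary $(c,\ldots,c)$ and remain in $\mathcal S_+$. This diagonal freedom is the engine of the argument.

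Assume for contradiction that $(H'_1,\ldots,H'_s)$ does not compress $\mathcal S$. By \eqref{eqn2.7b} there is a nonzero $(v_1,\ldots,v_s)\in\mathcal S_+$ with $v_i\in\text{null}H'_i$ for every $i$. Setting $u_j=v_{\pi(j)}$ and using \eqref{eqn8.2}, we have $u_1\in N_1$ and $u_j\in K\oplus N_j$ for $1<j\le s$. I would decompose $u_s=k+n_s$ with $k\in K$ and $n_s\in N_s$, and then define the shifted tuple $v'_i:=v_i-k$, which lives in $\mathcal S_+$ by the diagonal invariance above. A direct check using \eqref{eqn8.1} shows that $u'_1=u_1-k\in N_1+K=\text{null}H_{\pi(1)}$, $u'_j=u_j-k\in K\oplus N_j=\text{null}H_{\pi(j)}$ for $1<j<s$, and $u'_s=n_s\in N_s=\text{null}H_{\pi(s)}$, so $(v'_1,\ldots,v'_s)$ lies in $\text{null}H_1\times\cdots\times\text{null}H_s$.

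The last step of the contradiction is to verify nontriviality. If $u'_j={\bf 0}$ for all $j$, then $u_j=k$ for every $j$; in particular $k=u_1\in N_1$, but also $k\in K$, and the direct sum hypothesis $K\oplus N_1$ forces $K\cap N_1=\{{\bf 0}\}$, hence $k={\bf 0}$ and $(v_1,\ldots,v_s)={\bf 0}$, contradicting the choice of $(v_1,\ldots,v_s)$. This produces a nonzero element of $\mathcal S_+$ sitting in $\text{null}H_1\times\cdots\times\text{null}H_s$, contradicting the compression property of $(H_1,\ldots,H_s)$. I expect the main obstacle to be recognizing this shift-by-$-k$ trick: one has to see that the surplus $K$-component hidden inside $u_s$ for the new code is exactly what a diagonal translation can redistribute back to the original nullspace pattern, while the direct sum at position $\pi(1)$ is precisely what blocks collapse to zero.

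For the remaining claims, when each $H_i$ and $H'_i$ is surjective, $m_i=n-\dim\text{null}H_i$ and $m'_i=n-\dim\text{null}H'_i$. A direct tally from \eqref{eqn8.1} and \eqref{eqn8.2} yields $\sum_j\dim\text{null}H_{\pi(j)}=\sum_j\dim\text{null}H'_{\pi(j)}=(s-1)\dim K+\sum_j\dim N_j$, whence $M=M'$ and the compression sum-ratios coincide. When $\mathbb F$ is finite and $(H_1,\ldots,H_s)$ is perfect, Definition \ref{defP} gives $|\mathbb F|^M=|\mathcal S|$; combining $M=M'$ with the injectivity already established in the first part yields $|\mathbb F|^{M'}=|\mathcal S|$, so $(H'_1,\ldots,H'_s)$ is also perfect.
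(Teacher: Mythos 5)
Your proof is correct and follows essentially the same route as the paper's: both arguments take a tuple in $\mbox{null}H'_1\times\cdots\times\mbox{null}H'_s\cap\mathcal S_+$, subtract the diagonal shift by the $K$-component of the $\pi(s)$-th coordinate to land in $\mbox{null}H_1\times\cdots\times\mbox{null}H_s\cap\mathcal S_+=\{{\bf 0}\}$, and then use $K\cap N_1=\{{\bf 0}\}$ at position $\pi(1)$ to force the original tuple to vanish. The only difference is cosmetic (contradiction framing versus direct deduction), and the sum-ratio and perfectness tallies match the paper's as well.
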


\begin{proof}[Proof of Theorem \ref{thm8.1}]
WLOG, let's put $\pi=1$ for simplicity. {\color{nblue} Note that $(H'_1,\\\cdots,H'_s)|\mathcal{S}$ is injective if and only if $\mbox{null} H'_1 \times\cdots\times \mbox{null} H'_s \cap \mathcal S_+ = \{{\bf 0} \}$}, where {\color{dblue}
$\mathcal S_+=\{\sigma_1 - \sigma_2 | \sigma_1, \sigma_2\in \mathcal S\}$.\footnote{ 
\color{dblue} {Here we use the notation $S_+$ for the sake of consistence with 
\cite{ma2011universality}, where $\{\sigma_1 - \sigma_2 | \sigma_1, \sigma_2\in \mathcal S\} = 
\{\sigma_1 + \sigma_2 | \sigma_1, \sigma_2\in S\}$ since only binary sources were considered.}
}} Let
$\sigma_+\in \mbox{null} H'_1 \times\cdots\times \mbox{null} H'_s \cap \mathcal S_+$. Decompose
$\sigma_+=({\bf n}_1, {\bf k}_2+{\bf n}_2, \cdots, {\bf k}_{s-1}+{\bf n}_{s-1}, {\bf k}_s+{\bf n}_s)$, where
${\bf n}_i\in N_i$ and ${\bf k}_i\in K$ for all $i$. Since $\mathcal{S}_+$ is also a {\color{dblue} source with {\color{mygreen} deviation symmetry}}, we have
\begin{equation}
\sigma_+ - ( {\bf k}_s,\cdots,{\bf k}_s)=
   ({\bf n}_1-{\bf k}_s, {\bf k}_2+{\bf n}_2-{\bf k}_s,\cdots,{\bf k}_{s-1}+{\bf n}_{s-1}-{\bf k}_s, {\bf n}_s) \in \mathcal S_+.  
\label{eqn8.3}
\end{equation}
By checking the null space of $(H_1,\cdots,H_s)$, we find
$\sigma_+ - ( {\bf k}_s,\cdots,{\bf k}_s)\in \mathcal S_+ \cap \mbox{null} H_1 \times \cdots \times \mbox{null} H_s=\{{\bf 0}\}$
because $(H_1,\cdots,H_s)|_\mathcal S$ is injective. The first entry in the RHS of \eqref{eqn8.3}
gives ${\bf n}_1={\bf k}_s={\bf 0}$, and all other entries follow suit and give ${\bf k}_i={\bf 0}={\bf n}_i$ for all $i$.
Hence $\sigma_+={\bf 0}$ and $(H'_1,\cdots,H'_s)|_\mathcal S$ is injective.

Next if all $H_i$ and $H'_i$ are surjective for all $i$, then the two compressions have the same {\color{dblue} compression {\color{mbrown}sum-ratio}
$(sn-\sum_{i=1}^s dim N_i-(s-1)dim K)/n$.}

Lastly, if $(H_1,\cdots,H_s)$ is a perfect compression, then $(H_1,\cdots,H_s)|_\mathcal S$  is
surjective, still more {\color{black} so for} $(H_1,\cdots,H_s)$ per se. If $(H'_1,\cdots,H'_s)$ is
also surjective, the codeword spaces of the two compressions will be of the same
dimension $sn-\sum_{i=1}^s dim N_i-(s-1)dim K$ and hence same cardinality.
\end{proof}


\subsection{Proof of Uniqueness of {\color{dblue} Matrix Partition Codes}}

{\color{myblue} In this part, we present a major result of the paper---the proof of uniqueness of {\color{dblue} Matrix Partition Codes}. We will need to first illustrate how a parent matrix can be extracted from arbitrary compression. This in turn requires the following lemma. }

\begin{lem}
\label{lemma9}
Given a compression $(H_1,\cdots,H_s)$ of $\mathcal S$, we define an $(s-1)n \times sn$ matrix

\begin{equation}
X=
\begin{pmatrix}
I & -I & 0 & \cdots & 0 & 0 \\
0 & I  & -I & \cdots & 0 & 0 \\
     & & \cdots & \\
     0 & 0 &  0 & \cdots & I & -I
\end{pmatrix},
\label{eqn9.1}
\end{equation}
where $I$ denotes the $n \times n$ identity matrix, and an $M \times sn$ matrix  
\begin{equation}
J=
\begin{pmatrix}
H_1 & \cdots & 0 \\
     & \cdots \\
     0& \cdots & H_s
\end{pmatrix}.
 \label{eqn9.2}
\end{equation}
We have $(X,J)$ forms a compression for another {\color{dblue} source with {\color{mygreen} deviation symmetry}}
\begin{equation}
\mathcal S'=\{({\bf v},{\bf v}+{\bf d}') | {\bf v}\in \mathbb F^{sn}, {\bf d}'\in \tilde{\mathcal D}\} \subset \mathbb F^{sn} \times \mathbb F^{sn},      
\label{eqn9.3}
\end{equation}
where $\color{myblue}\tilde{\mathcal D}\color{black}$ was defined in \eqref{eqn2.1}.
If in addition, $(H_1,\cdots,H_s)$ is a perfect compression for $S$, then $(X, J)$ is a perfect
compression for $S'$.
\end{lem}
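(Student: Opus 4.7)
The plan is to verify in order: (i) that $\mathcal S'$ is itself a source with deviation symmetry on the two-terminal space $\mathbb F^{sn}\times \mathbb F^{sn}$; (ii) that the pair $(X,J)$, viewed as a two-encoder scheme, is injective on $\mathcal S'$; and (iii) for the perfect claim, that the codeword space has exactly the size of $\mathcal S'$. Part (i) is immediate: given $({\bf v},{\bf v}+{\bf d}')\in \mathcal S'$ and ${\bf u}\in\mathbb F^{sn}$, the translate $({\bf v}+{\bf u},{\bf v}+{\bf u}+{\bf d}')$ has the same form, so $\mathcal S'$ is a union of equivalence classes in the sense of \eqref{eqn1.13}. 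The equivalence classes are determined by the difference of the two blocks, so a representative set is $\{({\bf 0},{\bf d}'):{\bf d}'\in\tilde{\mathcal D}\}$, which yields $|\mathcal S'|=|\mathbb F|^{sn}|\tilde{\mathcal D}|$.

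For part (ii) I would suppose $(X,J)({\bf v}_1,{\bf v}_1+{\bf d}'_1)=(X,J)({\bf v}_2,{\bf v}_2+{\bf d}'_2)$ and partition each ${\bf v}_i$ into length-$n$ blocks ${\bf v}_{i,1},\ldots,{\bf v}_{i,s}$, with $({\bf d}_{i,1},\ldots,{\bf d}_{i,s})\in \mathcal D$ the preimage of ${\bf d}'_i$ in $\mathcal D$. The equation $X{\bf v}_1=X{\bf v}_2$ forces all consecutive block-differences of ${\bf v}_1-{\bf v}_2$ to vanish, so ${\bf v}_1-{\bf v}_2=({\bf w},\ldots,{\bf w})$ for some single ${\bf w}\in\mathbb F^n$; the block-diagonal equation $J({\bf v}_1+{\bf d}'_1)=J({\bf v}_2+{\bf d}'_2)$ then collapses, block by block, to $H_j({\bf w}+{\bf d}_{1,j}-{\bf d}_{2,j})={\bf 0}$ for every $j$. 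The pivotal move is to recognize that $({\bf w}+{\bf d}_{1,1},\ldots,{\bf w}+{\bf d}_{1,s})$ and $({\bf d}_{2,1},\ldots,{\bf d}_{2,s})$ both lie in $\mathcal S$, the former by deviation symmetry applied to $({\bf d}_{1,1},\ldots,{\bf d}_{1,s})\in \mathcal D\subset \mathcal S$; these two tuples are sent to the same codeword by $(H_1,\ldots,H_s)$, so injectivity of the original code gives ${\bf d}_{2,j}={\bf w}+{\bf d}_{1,j}$ for all $j$. Hence two members of $\mathcal D$ differ by the uniform shift $({\bf w},\ldots,{\bf w})$, and \eqref{eqn1.3} forces ${\bf w}={\bf 0}$, yielding ${\bf v}_1={\bf v}_2$ and ${\bf d}'_1={\bf d}'_2$.

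For part (iii) the total code length of $(X,J)$ is $(s-1)n+M$, so its codeword space satisfies $|\mathcal C'|=|\mathbb F|^{(s-1)n+M}$. Perfectness of $(H_1,\ldots,H_s)$ for $\mathcal S$ together with \eqref{eqn1.4} and \eqref{eqn1.15} gives $|\mathbb F|^M=|\mathbb F|^n|\tilde{\mathcal D}|$, hence $|\mathcal C'|=|\mathbb F|^{sn}|\tilde{\mathcal D}|=|\mathcal S'|$, and $(X,J)$ is perfect for $\mathcal S'$. The main obstacle is the injectivity argument, specifically the step where the componentwise relations $H_j({\bf w}+{\bf d}_{1,j}-{\bf d}_{2,j})={\bf 0}$ are lifted to a genuine collision of $(H_1,\ldots,H_s)$ on $\mathcal S$; this is precisely where the deviation symmetry of $\mathcal S$ (not merely the bare injectivity of the original code) is used, and everything else amounts to routine dimension counting.
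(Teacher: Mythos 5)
Your proposal is correct and follows essentially the same route as the paper's proof: using $X$ to force the difference of the first blocks to be a uniform shift $({\bf w},\ldots,{\bf w})$, using $J$ to reduce to a collision of $(H_1,\ldots,H_s)$ on the two $\mathcal S$-elements $({\bf w}+{\bf d}_{1,1},\ldots,{\bf w}+{\bf d}_{1,s})$ and $({\bf d}_{2,1},\ldots,{\bf d}_{2,s})$, invoking injectivity plus the uniqueness property of the representative set to get ${\bf w}={\bf 0}$, and finishing the perfect case by the same cardinality count. The only (harmless) addition is your explicit verification that $\mathcal S'$ has deviation symmetry, which the paper asserts without proof.
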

\begin{proof}
Suppose
\begin{equation}
\left( \left.
 X  
\begin{pmatrix}
 {\bf v}_1 \\ \vdots \\ {\bf v}_s 
\end{pmatrix} \right|
J 
\begin{pmatrix}
 {\bf v}_1+{\bf d}_1  \\ \vdots \\ {\bf v}_s+{\bf d}_s
\end{pmatrix}
\right)
=
\left( \left.
 X  
\begin{pmatrix}
 {\bf u}_1 \\ \vdots \\ {\bf u}_s 
\end{pmatrix} \right|
J 
\begin{pmatrix}
 {\bf u}_1+{\bf f}_1  \\ \vdots \\ {\bf u}_s+{\bf f}_s
\end{pmatrix}
\right)
\label{eqn9.4}
\end{equation}
where ${\bf v}_i, {\bf u}_i \in \mathbb F^n$ for all $i$; $({\bf d}_1,\cdots, {\bf d}_s), ({\bf f}_1,\cdots, {\bf f}_s)\in 
\mathcal D(\Leftrightarrow $
$
\begin{pmatrix}
 {\bf d}_1 \\ \vdots \\ {\bf d}_s
\end{pmatrix},
\begin{pmatrix}
 {\bf f}_1 \\ \vdots \\ {\bf f}_s
\end{pmatrix}
 \in \color{myblue}\tilde{\mathcal D}\color{black})$.
From the outputs of $X$, we get
\begin{equation}
\begin{pmatrix}
 {\bf v}_1 - {\bf v}_2 \\
{\bf v}_2 - {\bf v}_3 \\
\vdots \\
{\bf v}_{s-1} - {\bf v}_s
\end{pmatrix}
= 
\begin{pmatrix}
 {\bf u}_1 - {\bf u}_2 \\
{\bf u}_2 - {\bf u}_3 \\
 \vdots \\
{\bf u}_{s-1} - {\bf u}_s 
\end{pmatrix}
\Rightarrow
\begin{pmatrix}
 {\bf v}_1 - {\bf u}_1 \\
{\bf v}_2 - {\bf u}_2 \\
\vdots \\
{\bf v}_{s-1} - {\bf u}_{s-1}
\end{pmatrix}
= 
\begin{pmatrix}
 {\bf v}_2 - {\bf u}_2 \\
{\bf v}_3 - {\bf u}_3 \\
 \vdots \\
{\bf v}_s - {\bf u}_s 
\end{pmatrix}.
\label{eqn9.5}
\end{equation}
Thus we have
\begin{equation}
{\bf w}\triangleq {\bf v}_1-{\bf u}_1={\bf v}_2-{\bf u}_2=\cdots={\bf v}_s-{\bf u}_s. 
\label{eqn9.6}
\end{equation}
The outputs of $J$ give
\begin{align}
\begin{pmatrix}
 H_1({\bf v}_1+{\bf d}_1) \\ \vdots \\ H_s({\bf v}_s+{\bf d}_s)    
\end{pmatrix}
=
\begin{pmatrix}
         H_1({\bf u}_1+{\bf f}_1) \\ \vdots \\ H_s({\bf u}_s+{\bf f}_s)
\end{pmatrix} \\ \nonumber
\begin{pmatrix}
 H_1({\bf w}+{\bf d}_1) \\ \vdots \\ H_s({\bf w}+{\bf d}_s)          
\end{pmatrix}
= 
\begin{pmatrix}
 H_1({\bf f}_1) \\ \vdots \\ H_s({\bf f}_s)
\end{pmatrix}
\label{eqn9.7}
\end{align}

Since $({\bf w}+{\bf d}_1, \cdots , {\bf w}+{\bf d}_s)$ and  $({\bf f}_1,\cdots,{\bf f}_s)\in \mathcal S$ and 
$(H_1,\cdots,H_s)$ is a compression for $S$, we have
\color{myblue}
$({\bf w}+{\bf d}_1,\cdots, {\bf w}+{\bf d}_s)=({\bf f}_1,\cdots, {\bf f}_s)$. By \eqref{eqn1.11} and the fact 
that $({\bf d}_1,\cdots, {\bf d}_s)$ and $({\bf f}_1,\cdots, {\bf f}_s)$ are both in $\mathcal D$, we get
\color{black}
\begin{equation}
\bf w=0 \mbox{ and } 
({\bf d}_1,\cdots,{\bf d}_s)=({\bf f}_1,\cdots,{\bf f}_s), 
\label{eqn9.8}
\end{equation}
i.e. 
\begin{equation}
 \begin{pmatrix}
  {\bf v}_1 \\ \vdots \\ {\bf v}_s
 \end{pmatrix}=
\begin{pmatrix}
 {\bf u}_1 \\ \vdots \\ {\bf u}_s
\end{pmatrix}
\mbox{ and }
\begin{pmatrix}
 {\bf v}_1+{\bf d}_1 \\ \vdots \\ {\bf v}_s + {\bf d}_s
\end{pmatrix}=
\begin{pmatrix}
 {\bf u}_1 + {\bf f}_1 \\ \vdots \\ {\bf u}_s + {\bf f}_s
\end{pmatrix}.
\label{eqn9.9}
\end{equation}
Thus, $(X, J)$ is a compression for $\mathcal S'$. 
\color{myblue}

Finally, if $(H_1,\cdots, H_s)$ is a perfect compression, then $|\mathcal C|=|\mathcal S|=|\mathbb F|^n|\mathcal D|$. On the other hand, 
the target space of $(X,J)$ is $\mathbb F^{(s-1)n}\times \mathcal C$, whose cardinality is $|\mathbb F|^{(s-1)n}|\mathcal C|=|\mathbb F|^{sn}|\mathcal D|=|\mathbb F|^{sn}|\tilde{\mathcal D}|=|\mathcal S'|$. Therefore, $(X,J)$ is also a perfect compression.
\color{black} 

\end{proof}
 
\begin{thm}[Existence of Parent Matrix]
 \label{thm9}
Given a compression $(H_1,\cdots ,H_s)$, 
$\exists$ a surjective parent matrix $P$ of $\mathcal S$ 
satisfying \eqref{eqn2.2} and \eqref{eqn2.3} with $\mbox{null} H_i\subset \mbox{null} Q_i$ for all $i$.
Moreover, if $H_i$ are surjective for all $i$, then $P$ is an $(M-n) \times sn$
matrix \color{myblue} such that for finite $\mathbb F$, $P|_{\tilde{\mathcal D}}$ is bijective if and 
only if $(H_1,\cdots,H_s)$ is a
perfect compression. \color{black}
\end{thm}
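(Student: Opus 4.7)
My plan is to produce $P$ as the product of two carefully chosen matrices built directly from the encoder. Let $J = \mathrm{diag}(H_1,\ldots,H_s)$ (the $M\times sn$ block-diagonal matrix from Lemma \ref{lemma9}) and let $H^{\mathrm{all}} = (H_1^{T},\ldots,H_s^{T})^{T}$, an $M\times n$ matrix. The condition $Q_1+\cdots+Q_s=\mathbf 0$ of \eqref{eqn2.3} says that each row of $P$ must be a combination $(a_1H_1,\ldots,a_sH_s)$ with $\sum_i a_iH_i=\mathbf 0$; that is, each row of $P$ should lie in $\psi(W)\subset\mathbb F^{sn}$, where $W=\{a\in\mathbb F^M : aH^{\mathrm{all}}=\mathbf 0\}$ is the left null space of $H^{\mathrm{all}}$ and $\psi(a):=aJ$. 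This naturally suggests taking $P$ to be a row-basis matrix of $\psi(W)$; concretely I let $A$ be a row-basis matrix of $W$ and set $P=AJ$, row-reducing if necessary to guarantee surjectivity.

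The first step is to apply Lemma \ref{lem8.1}, which says $\bigcap_i\mathrm{null}\,H_i=\{\mathbf 0\}$; this is exactly the right null space of $H^{\mathrm{all}}$, so $\mathrm{rank}\,H^{\mathrm{all}}=n$ and $\dim W=M-n$. With the $P$ defined above, both \eqref{eqn2.3} (block-sum zero) and $\mathrm{null}\,H_i\subset\mathrm{null}\,Q_i$ are immediate from the construction, since each $Q_i$ has rows of the form $a_iH_i$ for rows $a_i$ coming from the $i$-th block of $A$.

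The heart of the argument is verifying \eqref{eqn2.2}, i.e.\ that $P|_{\tilde{\mathcal D}}$ is injective. If $P(\tilde d-\tilde f)=\mathbf 0$ for some $\tilde d,\tilde f\in\tilde{\mathcal D}$, then $J(\tilde d-\tilde f)$ is annihilated by every element of the row space of $A$, which equals $W$; hence $J(\tilde d-\tilde f)\in W^{\perp}=\mathrm{col}(H^{\mathrm{all}})$. This furnishes a vector $\mathbf x\in\mathbb F^n$ with $H_i(d_i-f_i)=H_i\mathbf x$ for every $i$. Now both $(d_1,\ldots,d_s)$ and $(\mathbf x,\ldots,\mathbf x)+(f_1,\ldots,f_s)$ lie in $\mathcal S$ (the second by deviation symmetry) and they are sent to the same point by $(H_1,\ldots,H_s)$; since this encoder compresses $\mathcal S$ the two tuples must coincide, and uniqueness of the decomposition in \eqref{eqn1.11} then forces $\mathbf x=\mathbf 0$ and $\tilde d=\tilde f$.

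For the ``moreover'' clause, if every $H_i$ is surjective then $\mathrm{left\ null}(J)=\prod_i\mathrm{left\ null}(H_i)=\{\mathbf 0\}$, so $\psi|_W$ is injective and $P=AJ$ is already a surjective matrix with exactly $\dim W=M-n$ rows. Since $P|_{\tilde{\mathcal D}}$ is injective into $\mathbb F^{M-n}$, bijectivity is equivalent to $|\tilde{\mathcal D}|=|\mathbb F|^{M-n}$, which by \eqref{eqn6.0} is precisely the perfectness condition for $(H_1,\ldots,H_s)$. The principal obstacle is the injectivity step: it hinges on recognizing the orthogonality identity $W^{\perp}=\mathrm{col}(H^{\mathrm{all}})$ and then transporting the resulting common shift $\mathbf x$ through the deviation symmetry of $\mathcal S$ to apply the uniqueness of decomposition; the remaining pieces are bookkeeping in linear algebra.
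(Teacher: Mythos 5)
Your proof is correct, and it reaches the theorem by a genuinely different route than the paper. The paper's proof is indirect: it invokes Lemma \ref{lemma9} to view $(X,J)$ as a compression of the derived two-terminal source $\mathcal S'$, then applies the nullspace-shifting Theorem \ref{thm8.1} to replace $(X,J)$ by $(I_{sn\times sn},P)$ with $\mbox{null}\,P=\mbox{null}\,X\oplus\mbox{null}\,J$, and reads off \eqref{eqn2.2}, \eqref{eqn2.3} and $\mbox{null}\,H_i\subset\mbox{null}\,Q_i$ from the containments $\mbox{null}\,X,\mbox{null}\,J\subset\mbox{null}\,P$ and the injectivity of $(I,P)|_{\mathcal S'}$. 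You instead construct $P=AJ$ explicitly, with $A$ a row basis of the left null space $W$ of the stacked matrix $H^{\mathrm{all}}$, and verify injectivity on $\tilde{\mathcal D}$ directly: the double-annihilator identity $W^{\perp}=\mathrm{col}(H^{\mathrm{all}})$ (valid over any field by the dimension count) produces the common shift $\mathbf x$, and then deviation symmetry plus the uniqueness of the decomposition \eqref{eqn1.11} forces $\mathbf x=\mathbf 0$. The two constructions in fact produce the same object, since $\mbox{null}(AJ)=\{z: Jz\in\mathrm{col}(H^{\mathrm{all}})\}=\mbox{null}\,X+\mbox{null}\,J$, so your $P$ has the same row space as the paper's; but your argument is self-contained (it needs only Lemma \ref{lem8.1}) and yields an explicit formula for $P$ in terms of the encoder, whereas the paper's route reuses machinery ($\mathcal S'$ and nullspace shifting) that it needs anyway for the perfectness transfer and for Theorem \ref{thm10}. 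One small imprecision worth fixing: condition \eqref{eqn2.3} alone does \emph{not} force each row of $P$ to have the form $(a_1H_1,\ldots,a_sH_s)$ with $\sum_i a_iH_i=\mathbf 0$; that form is a design choice you make precisely so that $\mbox{null}\,H_i\subset\mbox{null}\,Q_i$ holds, so phrase it as a construction rather than a consequence. The ``moreover'' clause is handled correctly: surjectivity of each $H_i$ makes the left null space of $J$ trivial, so $P$ keeps all $M-n$ independent rows, and the equivalence with perfectness then follows from \eqref{eqn6.0} exactly as in the paper.
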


\begin{proof}
Lemma \ref{lemma9} tells us that the {\color{dblue} corresponding $(X,J)$} defined in \eqref{eqn9.1} and
\eqref{eqn9.2} is a compression 
of $\mathcal S'$ (c.f. \eqref{eqn9.3}). By Theorem \ref{thm8.1}, any two matrices with null spaces \{0\} and 
$\mbox{null} X\oplus \mbox{null} J$ also forms a compression for $\mathcal S'$. Therefore,
$(I_{sn \times sn}, P)$ is a compression of $\mathcal S'$, where $P$ is an $r \times sn$ surjective matrix 
with $\mbox{null} P=\mbox{null} X\oplus \mbox{null} J$. It follows that 
\begin{equation}
r=sn-dim \mbox{null} X-dim \mbox{null} J.                                                      
\label{eqn9.10}
\end{equation}
To see $P$ is the parent matrix that we are looking for, we partition $P$ into
$P=[Q_1,\cdots,Q_s]$, where $Q_i$ are $r \times n$ matrices for all $i$. We have
$Q_1+Q_2+...+Q_s=0$ since 
\begin{equation}
\mbox{null} X=\left\{ \left.\begin{pmatrix}
                \bf v \\ \vdots \\ \bf v
               \end{pmatrix} \right|
\bf v\in \mathbb F^n \right\} \subset \mbox{null} P.
\label{eqn9.11}
\end{equation}
Thus $P$ {\color{black} satisfies} \eqref{eqn2.3}. 
Moreover,
\begin{equation}
\mbox{null} J=\left\{ 
\left.
\begin{pmatrix}
 {\bf n}_1 \\ \vdots \\ {\bf n}_s 
\end{pmatrix} \right| {\bf n}_i\in \mbox{null} H_i \mbox{ for } 1 \le i \le s \right\} \subset \mbox{null} P               
\label{eqn9.12}
\end{equation}
implies
\begin{equation}
\mbox{null} H_i \subset \mbox{null} Q_i \mbox{ for all } i.                                                  
\label{eqn9.13}
\end{equation}
To prove $P|_{\color{myblue}\tilde{\mathcal D}\color{black}}$ is injective (c.f. \eqref{eqn2.2}), we let ${\bf d}', {\bf f}' \in \color{myblue}\tilde{\mathcal D}\color{black}$. 
Suppose $P {\bf d}'=P {\bf f}'$. We have 
\begin{equation}
(I_{sn\times sn}{\bf 0}, P({\bf 0}+{\bf d}'))=(I_{sn \times sn}{\bf 0}, P({\bf 0}+{\bf f}')).                       
\label{eqn9.14}
\end{equation}
Since both $({\bf 0}, {\bf 0}+{\bf d}')$ and $({\bf 0}, {\bf 0}+{\bf f}')$ belong to $\mathcal S'$ and $(I_{sn \times sn}, P) |_{\mathcal S'}$ is
injective, we get ${\bf d}'={\bf f}'$ and hence $P$ satisfies \eqref{eqn2.2}.

Next if $H_i$ are surjective for all $i$, then
\begin{align}
M+dim\mbox{null}(H_1,\cdots,H_s) & = sn \\ \nonumber
                M+dim \mbox{null} J &= sn \\ \nonumber
                              M-n &  = sn - dim \mbox{null} J -dim \mbox{null} X       
\label{eqn9.15}
\end{align}
because $dim \mbox{null} X=n$. Hence \eqref{eqn9.10} becomes
\begin{equation}
                                    r=M-n                                                    
\label{eqn9.16}
\end{equation}
that $P$ is an $(M-n) \times sn$ matrix. \color{myblue}
Lastly for finite $\mathbb F$, $(H_1,\cdots, H_s)$ is a perfect compression iff 
$|\mathbb F|^{M-n}=|\tilde{\mathcal D}|$ by \eqref{eqn6.0}, iff $P|_{\tilde{\mathcal D}}$ is bijective as we have already shown $P|_{\tilde{\mathcal D}}$ is injective.
\color{black}
\end{proof}
%


\color{myblue}
\begin{thm}[Uniqueness of Partition Codes]
\label{thm10}
 Every linear lossless compression of a source with deviation symmetry is a Pre-Matrix Partition Code with a parent matrix obtained in Theorem 
\ref{thm9}. If the compression is linear-optimal or perfect, then the code is a Matrix 
Partition Code. 
\end{thm}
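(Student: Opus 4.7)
The plan is to verify the two assertions separately, leaning on the structure supplied by Theorem \ref{thm9}. For the first assertion, given a compression $(H_1,\ldots,H_s)$, I invoke Theorem \ref{thm9} to extract a surjective parent matrix $P=[Q_1|\cdots|Q_s]$ satisfying \eqref{eqn2.2} and \eqref{eqn2.3}, together with the key inclusion $\mbox{null}\, H_i\subset \mbox{null}\, Q_i$ for every $i$. I will then realize the given compression as a Pre-Matrix Partition Code with this parent matrix $P$ by simply taking $G'_i = H_i$, so that $T'$ is the vertical stacking of the $H_i$'s. Condition \eqref{eqn2.5} then reduces to $\bigcap_i \mbox{null}\, H_i = \{\mathbf{0}\}$, which is exactly the content of Lemma \ref{lem8.1}. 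Condition \eqref{eqn2.7} becomes $\mbox{null}\, H_i = \mbox{null}\, H_i \cap \mbox{null}\, Q_i$, which is immediate from the inclusion just cited.

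For the second assertion, I will appeal to the total code length bound of Theorem \ref{thm2.2} and Corollary \ref{cor2}. By the first assertion, $(H_1,\ldots,H_s)$ is a Pre-Matrix Partition Code with parent matrix $P$, so its total code length $M$ satisfies the lower bound of \eqref{new1}. Meanwhile, Corollary \ref{cor1} guarantees that, with the \emph{same} parent matrix $P$, one can construct a Matrix Partition Code attaining this lower bound, and such a code is itself a valid linear lossless compression of $\mathcal S$. If $(H_1,\ldots,H_s)$ is linear-optimal, its $M$ cannot strictly exceed the total code length of this competing Matrix Partition Code, so equality must hold in \eqref{new1}. The ``if and only if'' clause of Theorem \ref{thm2.2} then identifies $(H_1,\ldots,H_s)$ as a Matrix Partition Code.

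For the perfect case, note that $\mathbb F$ is finite and any linear lossless compression satisfies $|\mathcal C| = |\mathbb F|^M \ge |\mathcal S|$; a perfect compression attains equality by Definition \ref{defP}, hence minimizes $M$ and is automatically linear-optimal, reducing to the previous argument. I expect the main subtlety to be conceptual rather than computational: one must keep in mind that the parent matrix produced by Theorem \ref{thm9} depends on the given compression, and then observe that Corollary \ref{cor1} can be applied with \emph{this} $P$ to manufacture the competing Matrix Partition Code, which is what closes the argument.
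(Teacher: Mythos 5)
Your proposal is correct and follows essentially the same route as the paper: extract the parent matrix from Theorem \ref{thm9}, realize the given code as a Pre-Matrix Partition Code by taking $G'_i=H_i$ (with \eqref{eqn2.5} supplied by Lemma \ref{lem8.1} and \eqref{eqn2.7} by $\mbox{null}\,H_i\subset\mbox{null}\,Q_i$), and then invoke Theorem \ref{thm2.2} and its corollaries for the linear-optimal/perfect cases. Your second half is in fact a more explicit spelling-out of the paper's one-line appeal to those corollaries, but the substance is identical.
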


\color{black}
\begin{proof}
Let $(H_1,\cdots,H_s)$ be a compression and $P$ be the corresponding parent matrix in Theorem \ref{thm9}. We have 
$\mbox{null} \begin{pmatrix}
       Q_1 \\ \vdots \\ Q_s \\ H_1 \\ \vdots \\ H_s
      \end{pmatrix}
\subset
\mbox{null} 
\begin{pmatrix}
 H_1 \\ \vdots \\ H_s 
\end{pmatrix}
= \{{\bf 0}\}$ by Lemma \ref{lem8.1}. 
Hence 
$\begin{pmatrix}
 Q_1 \\ \vdots \\ Q_s \\ H_1 \\ \vdots \\ H_s 
\end{pmatrix}$
 is injective. By Theorem \ref{thm2.1}, $(H'_1,\cdots,H'_s)$ is a compression of Pre-Matrix 
Partition Code
if $\mbox{null} H'_i=\mbox{null} \begin{pmatrix}
                    Q_i \\ H_i
                   \end{pmatrix}$ for all $i$. In particular, $(H_1,\cdots,H_s)$ itself is such a 
compression because $\mbox{null} H_i\subset \mbox{null} Q_i$ (c.f. \eqref{eqn9.13}). 

\color{myblue}
If in addition that the compression code is linear-optimal or perfect, then it
 must be a Matrix Partition Code by the coronaries of Theorem \ref{thm2.2}.
\end{proof}
\color{black}

\color{myblue}
Given an $\mathcal S$, there always exists a linear-optimal compression $(H_1,\cdots,H_s)$ for it. 
\color{myblue} For finite field, \color{myblue} if the compression is perfect, then $|\mathcal S|=|\mathcal C|$ and we could not extend $\mathcal S$ without compromising the 
minimal compression {\color{mbrown}sum-ratio}. Otherwise, we have
room to add more elements to $\mathcal S$ without changing the compression. We can enlarge (see the proof of Theorem \ref{thm6}) the corresponding set $\color{myblue}\tilde{\mathcal D}\color{black}$ (and hence the $\mathcal S$ per se) until the surjective parent matrix $P$ (c.f. Theorem \ref{thm9}) we are working with becomes bijective when restricted to the extended $\tilde{\mathcal D}$. The compression for the extended $\mathcal S$ is now perfect and we cannot extend thing further. 
Hence the extended $\mathcal S$ is one of the largest sets containing $\mathcal S$ that admit the same minimal {\color{mbrown}sum-ratio}. 
Conversely, let $\mathcal S'\supset\mathcal S$ and both admit the same minimal {\color{mbrown}sum-ratio}. Let $(H'_1,\cdots,H'_s)$ be a linear-optimal compression for $\mathcal S'$.   
The compression must be perfect or $\mathcal S'$ is not one of those largest by the same argument. Notices that the compression works for $\mathcal S$, a subset of $\mathcal S'$. Actually, it is a linear-optimal compression for $\mathcal S$.

\color{myblue}
This method does not work for infinite field.
Even with both $(H_1,\cdots,H_s)|_{\mathcal S}$ and $P|_{\tilde{\mathcal D}}$
being bijective, there can be another compression $(H'_1,\cdots, H'_s)$ with
the same compression {\color{mbrown}sum-ratio} such that $(H'_1,\cdots,H'_s)|_{\mathcal S}$ 
and $P'|_{\tilde{\mathcal D}}$ is merely injective. Thus, $\mathcal S$ can still be extended without compromising the minimal {\color{mbrown}sum-ratio}. Here is an example.
Let $\mathbb F=\mathbb R$, $s=2$, $n=2$ and
\begin{equation}
\mathcal D = \{({\bf 0},a{\bf e}_1),({\bf 0},b{\bf e}_2)\mid |a|\ge1, |b|<1 \}.
\end{equation}
Notice that the two dimensional vector space $\{({\bf v},{\bf v}) | {\bf v}\in \mathbb R^2\}$ 
is a proper subset of $\mathcal S$.  Hence $\mathcal S$ cannot be 
compressed into $\mathcal C$ if dim$\hspace{0.1cm} \mathcal C\le 2$. It is not difficult to 
see that $\left(\begin{pmatrix}1 & 0\\ 0 & 1\end{pmatrix}, (1\hspace{0.3cm}1)\right)$ is a compression for $\mathcal S$ and
it is bijective when restricted to $\mathcal S$. The compression is
linear-optimal as dim$\hspace{0.1cm}\mathcal C=3$. The parent matrix 
$P=(-1\hspace{0.1cm} -\hspace{-0.1cm}1\hspace{0.1cm} 1 \hspace{0.1cm}1)$ is also 
bijective when restricted to the corresponding $\tilde{\mathcal D}$.
Now $\left(\begin{pmatrix}1 & 0\\ 0 & 1\end{pmatrix}, (2\hspace{0.3cm}1)\right)$ is another compression for $\mathcal S$ with 
parent matrix $P'= (-2\hspace{0.1cm} -\hspace{-0.1cm}1\hspace{0.1cm} 2 \hspace{0.1cm}1)$.
Neither $\left(\begin{pmatrix}1 & 0\\ 0 & 1\end{pmatrix}, (2\hspace{0.3cm}1)\right)|_{\mathcal S}$ nor $P'|_{\tilde{\mathcal D}}$ is bijective.
\color{black}

\color{black}

\section {Matrix Partition Codes for Hamming Sources}
\label{sect:hamming}

{\color{myblue} In this section, we will use {\color{dblue} Hamming sources described} in \eqref{eqn1.5} to give more concrete examples for {\color{dblue} Matrix Partition Codes}. Moreover, we will discuss linear-optimal compression for Hamming sources over both finite and infinite fields.

}

\subsection{Parent matrix $P$ of a {\color{dblue} Matrix Partition} Code for a Hamming Source}


Recall a Hamming source $\mathcal S$ described by \eqref{eqn1.5}.
For $s \ge 3$, we have 
$\mathcal D$ in \eqref{eqn1.6}. The corresponding 
\begin{equation}
\color{myblue}\tilde{\mathcal D}\color{black}=\{a {\bf e}_i | a\in \mathbb F, 1 \le i \le sn\} \subset \mathbb F^{sn}
\quad \mbox{(c.f. \eqref{eqn2.1}).} 
 \label{eqn3.1}
\end{equation}
To have an $r\times sn$ matrix $P$ satisfying \eqref{eqn2.2}, the necessary and sufficient condition is
\begin{equation}
\mbox{
each column of $P$ can't be the multiple of the other.                
 }
\label{eqn3.2}
\end{equation}
Say if ${\bf P}_i=a {\bf P}_j$,
where ${\bf P}_i$ and ${\bf P}_j$ are the i-th and j-th column of $P$, respectively; and
$a\in \mathbb F$. Then $P({\bf e}_i)=P(a{\bf e}_j)$ and 
\eqref{eqn2.2}
implies $i=j$ and $a=1$.
Conversely if $P(a {\bf e}_i)=P(b {\bf e}_j)$, then $a{\bf P}_i=b {\bf P}_j$. Condition \eqref{eqn3.2} will
imply $a=b=0$ or $a=b\neq 0$ and $i=j$, i.e., \eqref{eqn2.2} will be fulfilled.
As $sn \ge 3$, we have 
\begin{equation}
r>1.                                                                                    
 \label{eqn3.3}
\end{equation}

For infinite $\mathbb F$, we can always achieve \eqref{eqn3.2} with $r=2$. Explicitly, $P$ can be
\begin{equation}
P=
\begin{pmatrix}
 1 & 1 & \cdots & 1 \\
a_1 & a_2 & \cdots & a_{sn}
\end{pmatrix},
\label{eqn3.4}
\end{equation}
where $a_1, a_2,\cdots, a_{sn}$ are distinct. When $\mathbb F$ is finite. The condition 
\eqref{eqn3.2}
becomes
\begin{equation}
sn \le (|\mathbb F|^r-1)/(|\mathbb F|-1).                                                           
 \label{eqn3.5}
\end{equation}
Take $r=2$ and $\mathbb F=\mathbb Z_5$ as an example. Condition 
\eqref{eqn3.5}
 gives $sn \le 6$ and 
$P$ can be any segment of
\begin{equation}
\begin{pmatrix}
0 & 1 & 1 & 1 & 1 & 1 \\
1 & 0 & 1 & 2 & 3 & 4
\end{pmatrix}
\label{eqn3.6}
\end{equation}

Now we consider $s=2$ with $\mathcal D$ given by 
\eqref{eqn1.9}. The corresponding
\begin{equation}
\color{myblue}\tilde{\mathcal D}\color{black}=\{a{\bf e}_i | a\in \mathbb F, s<i \le 2s\}.
 \label{eqn3.7}
\end{equation}
Let $P=[Q_1 | Q_2]$ with $Q_1$ and $Q_2$ are $r \times n$ matrices. 
 The necessary and sufficient condition for $P$ to satisfy
$\eqref{eqn2.2}$ become
\begin{equation}
\mbox{
$Q_1$ is arbitrary;  each column of $Q_2$ can't be the multiple of the other. 
}
\label{eqn3.8}
\end{equation} 
Obviously, we can further set $Q_1=-Q_2$ such that \eqref{eqn2.3} will be satisfied.
Again we can always achieve 
\eqref{eqn3.8} 
with $r=2$ if $\mathbb F$ is infinite, e.g. we can set
\begin{equation}
Q_2=
\begin{pmatrix}
 1 & 1 & \cdots & 1 \\
a_1 & a_2 & \cdots & a_n
\end{pmatrix},
\label{eqn3.9}
\end{equation}
with distinct elements $a_1,\cdots,a_n$ of $\mathbb F$.
For finite $\mathbb F$, the condition 
\eqref{eqn3.8}
become
\begin{equation}
n \le (|\mathbb F|^r-1)/(|\mathbb F|-1).                                                                                
\label{eqn3.10}
\end{equation}

\subsection{Linear-Optimal compression for Hamming Source with $s=2$}


As in the last section 
\eqref{eqn3.7}-\eqref{eqn3.10}
mentioned. We take
\begin{equation}
P= 
[\mbox{
\begin{tabular}{ @{} c @{} ;{2pt/2pt} @{} c @{} }
 $\hspace{-0.1cm} -Q_2 \hspace{0.03cm}$ &  $\hspace{0.08cm} Q_2 \hspace{-0.1cm}$
\end{tabular}
}].                                                                                    
\label{eqn4.1}
\end{equation}
with $Q_2$ satisfying the condition \eqref{eqn3.8}. \color{myblue} Let $C_2$ be 
a row basis matrix of $Q_2$ as in the Matrix Partition Code. Then $C_2$ also fulfills \eqref{eqn3.8}. For simplicity, we assume $Q_2$ itself is surjective and $C_2=Q_2$. 
Notice that $Q_2$ also equal to the $Y$ in \eqref{eqn2.18g}
Let $T=\begin{pmatrix}
 G_1 \\ G_2
\end{pmatrix}$ be the matrix that fulfills \eqref{eqn2.18z}, we have
\begin{equation}
\left(
U_1\begin{pmatrix}
 G_1 \\Q_2
\end{pmatrix},
U_2\begin{pmatrix}
 G_2 \\Q_2
\end{pmatrix}
\right)
\mbox{ forms a Matrix Partition Code for $\mathcal S$, }
\label{eqn4.4} 
\end{equation}
for any invertible matrices $U_1, U_2$ with appropriate sizes.
Notice that $\begin{pmatrix}
 T \\ Q_2
\end{pmatrix}$
is invertible by \eqref{eqn2.18z} with $Q_2=Y$. Let $U_1$ be the inverse of  $\begin{pmatrix}
 T \\ Q_2
\end{pmatrix}$ and $U_2$ be an identity matrix.
Put $G_1=T$ and $G_2$ as void, we get a compression
\begin{equation}
(I_{n \times n}, Q_2).
\label{eqn4.4b}
\end{equation} 
 Let $Q_2$ be an $r'\times n$ matrix. The total code length for \eqref{eqn4.4b} and hence for \eqref{eqn4.4} as well is
\begin{equation}
M=n+r'
\label{eqn4.5}
\end{equation}
\color{black}

For finite $\mathbb F$ and a given $n$.
We must have $|\mathcal S| \le |\mathcal C|$, ie
\begin{align}
|\mathcal S|=|\mathbb F|^n(1+n(|\mathbb F|-1)) \le |\mathbb F|^{n+r'} 
 \nonumber \\
n \le (|\mathbb F|^{r'}-1)/(|\mathbb F|-1).                                                              
\label{eqn4.6}
\end{align}
The total code length $M$ will be minimized if $r'$ is the smallest 
integer satisfying \eqref{eqn4.6}. Fortunately, such $r'$ observes the sufficient condition
\eqref{eqn3.10} with $r=r'$. That means we can always pick a $Q_2$ with that $r$ and get
linear-optimal compressions.

For infinite $\mathbb F$ and $n=1$, $\{[1], [1]\}$ is a linear-optimal compression, since  something like $\{[1], void\}$ would give the same output 
for $\{{\bf e}_1,{\bf e}_1\}$ and $\{{\bf e}_1, 2{\bf e}_1\}$. 
For $n\ge 2$, we can always pick $r=2$ with $Q_2$ defined in \eqref{eqn3.9} to get  Matrix Partition Codes \eqref{eqn4.4} with $M=n+2$. 
 The compression obtained is actually linear-optimal. Let $(H'_1,H'_2)$ be another compression 
with another codeword space $\mathcal C'$. The proper subset
\begin{equation}
\mathcal B=\{({\bf v},{\bf v}+a{\bf e}_1) | {\bf v}\in \mathbb F^n, a\in\mathbb F \} \subset \mathcal S 
\label{eqn4.7}
 \end{equation}
is a vector space with dimension $n+1$. Since all of the compressions considered are linear and
injective within $\mathcal S$, the output of $\mathcal B$ is a vector subspace
of $\mathcal C'$ with dim $n+1$. To accommodate the output of the $({\bf 0},{\bf e}_2)$,
which belongs to $\mathcal S$ but not $\mathcal B$, {\color{black} $dim \mathcal C'$} must be at
least $n+2$.

\subsection{Optimal Lossless Compression for Hamming Source over Infinite Fields}


We have found such a compression for $s=2$ in the  previous section. So let $s
\ge 3$. For $n=1$, the linear-optimal compression is $\{[1],\cdots,[1]\}$. The reason
is the same as the case of $s=2$. 

Now let $s \ge 3$ and $n \ge 2$. 
First we will deal with the compression {\color{dblue} ratio}.  By more or less the same argument
of the two-source case (c.f. \eqref{eqn4.7}), the dimension of the codeword space
$\mathcal C$ cannot be less than $n+2$.
In addition, each column within any encoding matrix  cannot be multiple of each
other. Say if $(H_1)_i=a(H_1)_j$, then {\color{black} $(a {\bf e}_i, {\bf 0},\cdots,{\bf 0})$ and
$({\bf e}_j,{\bf 0},\cdots,{\bf 0})$} will share the same output. That means each encoding matrix
has at least $2$ rows since $n \ge 2$. Therefore we have
\begin{equation}
dim \mathcal C \ge \max(n+2, 2s).      
\label{eqn5.1}
\end{equation}
Now we are going to build a compression for $S$ over $\mathbb F$ with characteristic 
0 and $dim \mathcal C =
\max(n+2,2s)$. By \eqref{eqn5.1}, that compression will be a linear-optimal compression.
We construct a $2 \times sn$ matrix $P$, the parent matrix, through its component
$P=[Q_1|\cdots|Q_s]$ (c.f. \eqref{eqn2.3}). We will make use of the fact that $\mathbb 
F$ contains all rational numbers as a subfield. 
Let $p_1, p_2, \cdots,p_s, q$ be prime numbers such that
\begin{equation}
0<p_s=p_1<p_2<\cdots<p_{s-1}<q.      
\label{eqn5.2} 
\end{equation}
Let 
\begin{align}
\left\{ 
\begin{matrix}
t_{2i-1}&=&p_iq  \\
      t_{2i}&=&p_{i+1}
      \end{matrix}
      \right.
\qquad \mbox{ for $1 \le i<s$}.
\label{eqn5.3a} 
\end{align}
We define the $j$-th column of $Q_i$ as
\begin{equation}
(Q_i)_j= 
\begin{pmatrix} 
t_{2i-1}^j \\
          t_{2i}^j
         \end{pmatrix}
=
\begin{pmatrix} 
 (p_iq)^j \\
         p_{i+1}^j
         \end{pmatrix}
              \mbox {for $1\le i<s$, $1 \le j \le n $.}                                          
\label{eqn5.3b}
 \end{equation}
They are not multiplier of each other. Say if $(Q_i)_j$ is a multiplier of $(Q_k)_l$, then
\begin{equation}
(p_iq)^jp_{k+1}^l=(p_kq)^lp_{i+1}^j,                                                       
\label{eqn5.4}
\end{equation}
which gives $i=k$ and $l=j$.
We put $Q_s=-(Q_1+\cdots+Q_{s-1})$ and get
\begin{equation}
(Q_s)_j=-\begin{pmatrix}
q^j(p_1^j+p_2^j+\cdots+p_{s-1}^j) \\
                p_2^j+p_3^j+\cdots+p_s^j          
         \end{pmatrix}
\mbox{ for $1 \le j \le n$},                      
\label{eqn5.5}
 \end{equation}
which is a multiplier of 
$\begin{pmatrix}
  q^j \\
       1
 \end{pmatrix}$
 since $p_1=p_s$.                                                                  
  Obviously the vectors in 
\eqref{eqn5.5} are not multipliers of the others nor  multipliers of those in
\eqref{eqn5.3b}. Hence by \eqref{eqn3.2}, our $P$ satisfies \eqref{eqn2.2}. It
fulfils \eqref{eqn2.3} by construction.

If $2(s-1) \ge n$, we let $T'$ (c.f. \eqref{eqn2.5}) be a void matrix and have
an injective $2(s-1) \times n$ matrix
\begin{equation}
R=\begin{pmatrix}
Q_1 \\
      \vdots \\
      Q_{s-1}
\end{pmatrix}=
{\color{nblue}
\begin{pmatrix}
 t_1 & t_1^2 & \cdots  & t_1^n \\
 t_2 & t_2^2 & \cdots & t_2^n \\
 & & \cdots & \\
 t_{2(s-1)} & t_{2(s-1)}^2 & \cdots &  t_{2(s-1)}^n                                                                         
\end{pmatrix}}
\label{eqn5.7} 
\end{equation}
It is injective as it contains the minor 
{\color{nblue}
\begin{equation}
\begin{pmatrix}
 t_1 & t_1^2 & \cdots  & t_1^n \\
 t_2 & t_2^2 & \cdots & t_2^n \\
 & & \cdots & \\
 t_n & t_n^2 & \cdots &  t_n^n                                                                         
\end{pmatrix}
\label{eqn5.8}
\end{equation}}
whose determinant is 
{\color{nblue}
\begin{equation}
t_1t_2 \cdots t_n \prod_{i>j}(t_i-t_j) \neq 0 \mbox{ (c.f. \eqref{eqn5.3a})}.    
\label{eqn5.9}
 \end{equation}}
By Theorem {\color{black} \ref{thm2.1}},
$(Q_1,\cdots,Q_s)$ is a compression for $\mathcal S$, a linear-optimal  compression (c.f. \eqref{eqn5.1}). It is
also a {\color{dblue} Matrix Partition Code} with void $T$ and $C_i=Q_i$ for all $i$.
 
For $2(s-1)<n$, we pick any nonzero {\color{nblue}$t_{2s-1},\cdots,t_n$}  such that {\color{nblue} $t_i \neq t_j$} for
all $i \neq j$. We let 
{\color{nblue}
\begin{equation}
T'=
\begin{pmatrix}
t_{2s-1} & t_{2s-1}^2 & \cdots & t_{2s-1}^n \\
     & & \cdots & \\
      t_n      &  t_n^2 & \cdots & t_n^n                                                          
\end{pmatrix}.
\label{eqn5.10}
\end{equation}}
Then 
{\color{nblue}
\begin{equation}
R=\begin{pmatrix}
Q_1 \\
     \vdots \\
     Q_{s-1} \\
      T'   
  \end{pmatrix}
= \begin{pmatrix}
 t_1 & t_1^2 & \cdots  & t_1^n \\
 t_2 & t_2^2 & \cdots & t_2^n \\
 & & \cdots & \\
 t_n & t_n^2 & \cdots &  t_n^n                                                                         
\end{pmatrix},
\label{eqn5.11}                                                                                            \end{equation}}
whose determinant is not zero as all $t_i$ are distinct. Hence $R$ is injective
(actually bijective) and by Theorem \ref{thm2.1}, 
$\begin{pmatrix}
 G_1 \\ Q_1 
 \end{pmatrix}, \cdots, 
\begin{pmatrix}
 G_s \\ Q_s
\end{pmatrix}
$
is a compression of $S$, a linear-optimal compression again, where            
$ \begin{pmatrix}
   G_1 \\ \vdots \\ G_s 
  \end{pmatrix}
 = T'$.               
It is a {\color{dblue} Matrix Partition Code} with $T=T'$, 
$Y=\begin{pmatrix}
    Q_1 \\ \vdots \\ Q_{s-1}
   \end{pmatrix}
$
and $C_i = Q_i$ for all $i$. 

Actually, for infinite field $\mathbb F$ with any characteristic, the chance for two randomly 
picked vectors (with more than one entry) to be multipliers of the others are virtually 
zero. So one may just pick $Q_1,Q_2,\cdots,Q_{s-1}$ randomly, instead of \eqref{eqn5.3b}.
Then define $Q_s$ as $-(Q_1+Q_2+\cdots +Q_{s-1})$. The chance to have a pair of columns 
in $P=[Q_1|\cdots|Q_s]$ that are multipliers of the others are again virtually zero. 
Moreover,  
$ \begin{pmatrix}
   Q_1 \\ \vdots \\ Q_{s-1} 
  \end{pmatrix}
$ should have full rank as its entries are randomly selected. So if $2(s-1)\ge n$, then 
$ \begin{pmatrix}
   Q_1 \\ \vdots \\ Q_{s-1} 
  \end{pmatrix}$ should be injective and $(Q_1,\cdots,Q_s)$ forms a linear-optimal compression 
like the aforementioned example. For $2(s-1)<n$, we should be able to augment 
$\begin{pmatrix}
   Q_1 \\ \vdots \\ Q_{s-1} 
  \end{pmatrix}$
to a bijective 
$\begin{pmatrix}
   Q_1 \\ \vdots \\ Q_{s-1}\\ T 
  \end{pmatrix}$. Then 
$\begin{pmatrix}
 G_1 \\ Q_1 
 \end{pmatrix}, \cdots, 
\begin{pmatrix}
 G_s \\ Q_s
\end{pmatrix}
$ will forms an linear-optimal compression for $S$ as before.

\color{myblue}         
\section{Structure of Deviation Symmetry}
\label{sect:fixing}
Given $n,s,\mathbb F$, let $\Sigma$ be the set of sources with deviation symmetry. For any
$\mathcal S\in\Sigma$, we pick a representative set $\mathcal D(\mathcal S)$ of it. Notice that
we have $n$ degree of freedom in choosing a particular {\color{mbrown}$\delta\in\mathcal D(\mathcal S)$}.
Therefore it is possible to fix a certain component to be a constant within the whole $\mathcal D(\mathcal S)$. Say if we want to fix the last component to be zero, then we simply replace
${\color{mbrown}\delta=}({\bf d}_1,\cdots, {\bf d}_s)$ with $({\bf d}_1,\cdots, {\bf d}_s)-({\bf d}_s,\cdots, {\bf d}_s), \forall ({\bf d}_1,\cdots,{\bf d}_s)\in \mathcal D(\mathcal S)$. We call such fixing as component-fixing. Since each component contains $n$ entries, the component-fixing eliminates all $n$ degrees of freedom in choosing {\color{mbrown}$\delta$}. 

Let us impose a component-fixing throughout the $\Sigma$. It can be shown that $\mathcal S_1\subset \mathcal S_2$ if and only if 
$\mathcal D(\mathcal S_1)\subset\mathcal D(\mathcal S_2)$. Now $\mathcal D$ can be viewed as an injective 
mapping. It is more than that. More specifically, after fixing the last component to be zero, then $\mathcal D: \Sigma\rightarrow \mbox{power set of }\underbrace{{\color{mbrown}\mathbb F^n\times \cdots \times \mathbb F^n}}_{s-1\mbox{ terms}}\times\{\bf 0\}$ becomes a bijective mapping. Power set of a set is a $\sigma$-algebra for sure. Actually, we can show by definition (2.1) that $\Sigma$ is also a $\sigma$-algebra. The bijective
mapping $\mathcal D$ preserves their structures in sense that $\mathcal D(\bigcup_{i=1}^\infty \mathcal S_i)=\bigcup_{i=1}^\infty \mathcal D(\mathcal S_i)$, $\mathcal D(\mathcal S^c)=\mathcal D(\mathcal S)^c$ and $\mathcal D(\emptyset)=\emptyset$, where $c$ stands for complement. 

Throughout the paper, we say $\mathcal S$ can be compressed by the encoding matrices $(H_1,\cdots,H_s)$ if $(H_1,\cdots,H_s)|_\mathcal S$ is injective, which does not specify if the source actually is compressed into a smaller space. To distinguish thing, we will say $S$ is compressible if and only if it can be compressed by
$(H_1,\cdots,H_s)$ into a lower dimensional space  (i.e. dim $\mathcal C<sn$). We will make use of the component-fixing  to
determine the necessary and sufficient condition for $\mathcal S$ to be compressible.

\begin{thm}
\label{thm4s1}
$\mathcal S$ is compressible if and only if there exists a subset 
$\mathcal D_-$ of $\mathbb F^{n-1}\times\underbrace{\mathbb F^n\times\cdots\times\mathbb F^n}_{s-2\mbox{ terms}}$ such that the representative set $\mathcal D$ can be expressed as
\begin{equation}
\mathcal D=\left\{\pi\left(B\begin{bmatrix}f({\bf d}_1,\cdots, {\bf d}_{s-1})\\{\bf d}_1\end{bmatrix}, {\bf d}_2,..., {\bf d}_{s-1}, {\bf 0}\right)  | ({\bf d}_1,\cdots, {\bf d}_{s-1})\in \mathcal D_-\right\}, 
\label{eqn4s.1}
\end{equation}
where $\pi$ is a position permutation of the $\mathbb F^n$-vectors, $B$ is an $n\times n$ invertible matrix and $f$ is a well-defined function from $\mathcal D_-$ to $\mathbb F$.
\end{thm}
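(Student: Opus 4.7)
The plan is to prove each direction separately; the converse is essentially a direct construction while the forward direction is the heart of the argument. For the converse, suppose $\mathcal D$ has the form \eqref{eqn4s.1}. Let $j=\pi^{-1}(1)$ and $k=\pi^{-1}(s)$. I would take $H_j$ to be the $(n-1)\times n$ matrix consisting of the last $n-1$ rows of $B^{-1}$, and $H_i=I_n$ for all other $i$. Then $M=(n-1)+(s-1)n=sn-1<sn$. To verify injectivity on $\mathcal S$: terminal $k$ recovers ${\bf v}$ via the identity encoding (its deviation is ${\bf 0}$); the other identity encoders recover ${\bf d}_i$ for $i=2,\dots,s-1$ by subtracting ${\bf v}$; and $H_j$ applied to ${\bf v}+B[f;{\bf d}_1]$ yields the last $n-1$ entries of $B^{-1}{\bf v}+[f;{\bf d}_1]$, from which ${\bf d}_1$ emerges once ${\bf v}$ is subtracted. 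Well-definedness of $f$ then determines the remaining first coordinate and hence $\delta$.

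For the forward direction, assume a compression $(H_1,\dots,H_s)$ with $M<sn$. Since $\sum m_i<sn$, some $m_j<n$ and hence $\mbox{null}\,H_j\ne\{{\bf 0}\}$; fix a nonzero ${\bf w}\in\mbox{null}\,H_j$. I would then pick any $k\ne j$ (possible since $s\ge 2$) and rechoose the representative set $\mathcal D$ so that its $k$-th component is always ${\bf 0}$ (allowed by \eqref{eqn1.12}). The key separability claim is: no two distinct $\delta,\delta'\in\mathcal D$ can agree in every component other than the $j$-th while having $d_j-d_j'\in\mbox{span}({\bf w})$. Indeed, since $\mathcal D\subset\mathcal S$, both $\delta$ and $\delta'$ lie in $\mathcal S$, and their difference has zero in every slot except $j$, where it lies in $\mbox{span}({\bf w})\subset\mbox{null}\,H_j$; hence $\delta-\delta'\in\mbox{null}\,H_1\times\cdots\times\mbox{null}\,H_s$, contradicting injectivity of the compression on $\mathcal S$.

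With the claim in hand, I would pick an invertible $B$ whose first column is ${\bf w}$, so that $B^{-1}{\bf w}={\bf e}_1$. Writing $B^{-1}d_j=[\alpha;{\bf d}_1]$ with $\alpha\in\mathbb F$ and ${\bf d}_1\in\mathbb F^{n-1}$, the claim says that once the coordinates $(d_i)_{i\ne j,k}$ and ${\bf d}_1$ are specified by an element of $\mathcal D$, the value of $\alpha$ is uniquely forced. This defines a function $f$ on the image set $\mathcal D_-\subset\mathbb F^{n-1}\times\mathbb F^{(s-2)n}$. Taking any permutation $\pi$ with $\pi(1)=j$ and $\pi(s)=k$ (and the remaining positions filled in arbitrarily), every $\delta\in\mathcal D$ is exhibited in the required form $\pi(B[f;{\bf d}_1],{\bf d}_2,\dots,{\bf d}_{s-1},{\bf 0})$.

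The main obstacle is establishing the separability claim, but in the chosen setup it reduces cleanly to the single kernel vector ${\bf w}$: the component-fixing at an index $k\ne j$ forces the additive shift $({\bf v},\dots,{\bf v})$ to be zero when comparing $\delta$ and $\delta'$, isolating the difference to the $j$-th slot. A minor subtlety worth flagging is that nothing requires $k=s$---any index distinct from $j$ serves, with $\pi$ absorbing the relabeling---and that the freedom to rechoose the representative set, provided by \eqref{eqn1.12}, is essential to aligning the fixed-zero slot with $k$.
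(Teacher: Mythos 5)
Your proof is correct and follows essentially the same route as the paper's: pick a nonzero kernel vector of one encoder, fix a representative set whose $k$-th component is zero, change basis with an invertible $B$ sending that kernel vector to the first coordinate, and show the lost first entry of $B^{-1}{\bf d}_j$ is a well-defined function of the remaining data. The only cosmetic difference is that you establish well-definedness of $f$ directly from the null-space criterion $\mathcal S_+\cap(\mbox{null}\,H_1\times\cdots\times\mbox{null}\,H_s)=\{{\bf 0}\}$ (your separability claim), whereas the paper phrases the same fact as decodability of the missing entry from the output of the auxiliary compression $(A,I_n,\cdots,I_n)$.
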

\begin{proof}
"$\Rightarrow$" Suppose $\mathcal S$ can be compressed by $(H_1,\cdots, H_s)$ into a lower dimensional space. One of the $H_i$, say $i=1$ WLOG, must have a nonzero null space. So let
${\bf 0}\neq {\bf v}\in \mbox{null} H_1$. Also let $A$ be an $(n-1)\times n$ matrix with null $A=$ span {$\{{\bf v}\}$}. 
Then $S$ can also be compressed by $(A,\underbrace{I_n,\cdots, I_n}_{s-1\mbox{ terms}})$ whose null space is a subspace of $(H_1,\cdots, H_s)$'s. Fix the last component of every  ${\color{mbrown}\delta} \in\mathcal D$ to be zero. For any $\sigma\in \mathcal S$, $\exists$ a unique ${\bf v}\in\mathbb F^n$ and a unique $({\bf d},{\bf d}_2,...,{\bf d}_s,{\bf 0})\in \mathcal D$ such that $\sigma=({\bf v}+{\bf d}, {\bf v}+{\bf d}_2,\cdots, {\bf v}+{\bf d}_{s-1}, {\bf v})$.
We have  
\begin{equation}
(A,\underbrace{I_n,\cdots, I_n}_{s-1\mbox{ terms}})\sigma=([{\bf 0} | I_{n-1}]B^{-1}({\bf v}+{\bf d}),{\bf v}+{\bf d}_2,\cdots,{\bf v}+{\bf d}_{s-1},{\bf v}),
\end{equation}
where $B$ is an $n\times n$ invertible matrix such that $AB=[{\bf 0} | I_{n-1}]$.
Therefore, we get everything back directly, except for the first entry of $B^{-1}{\bf d}$. Since the $\mathcal S$ can be compressed by $(A,\underbrace{I_n,\cdots, I_n}_{s-1\mbox{ terms}})$ (losslessly), we must be able to retrieve the lost part from the output. Mathematically, the first entry of $B^{-1} {\bf d}$ has to be a function of $[{\bf 0} | I_{n-1}]B^{-1}{\bf d}, {\bf d}_2,\cdots,{\bf d}_{s-1}$ and ${\bf v}$. However, we are talking about deviation symmetry that $\mathcal D$ does not depend on ${\bf v}$. Therefore $\mathcal D$ has the form of \eqref{eqn4s.1} with $\pi=1$ and ${\bf d}_1=[{\bf 0} | I_{n-1}]B^{-1} {\bf d}$.

"$\Leftarrow$" Conversely, given \eqref{eqn4s.1}, we let $H_1=[{\bf 0} | I_{n-1}]B^{-1}$, $H_2=H_3=\cdots = H_s = I_n$. Then $\mathcal S$ can be compressed by $(H_{\pi(1)},\cdots, H_{\pi(s)})$ losslessly.
\end{proof}

The argument in the theorem can be generalized until null $H_i\ne\{{\bf 0}\}$ for all $i$, i.e.
 actual compression happens at each terminal. 

\color{black}

\section{Conclusion}

{\color{myblue} In this paper, we study zero-error \color{myblue} linear \color{black} coding of a set of rather general sources known as {\color{dblue} sources with {\color{mygreen} deviation symmetry}}. {\color{dblue} Matrix Partition Codes} can be used to efficiently compress {\color{dblue} sources with {\color{mygreen} deviation symmetry}}. We will conclude here by summarizing the construction procedure of a {\color{dblue} Matrix Partition Code} in the following. }
Suppose we want to compress a {\color{dblue} source with {\color{mygreen} deviation symmetry}} $\mathcal S$  {\color{dblue} losslessly}. 
We can simply search the compression within the framework of Theorem \ref{thm2.1}
because Theorem \ref{thm10} tells us that there is no other way causing difference. So we 
need to fix a $\mathcal D$ (c.f. \eqref{eqn1.1}, \eqref{eqn1.3}) first. Theorem \ref{thm2.3} ensures that the choice of $\mathcal D$ does not \color{myblue} affect the end results. \color{black}  

Then we have to find the parent matrix $P$, an $r \times sn$ matrix satisfying
\eqref{eqn2.2} and \eqref{eqn2.3}. Such $P$ always exists as compression always exists. Precisely
all encoding matrices are identity matrix forms a trivial compression and $X$ (c.f. \eqref{eqn9.1})
is the corresponding parent matrix. The problem is about the compression
rate. Basically, $P$ with lower $r$ ends up with {\color{mbrown}more efficient compression}. On the other hand, it is easier to form the $P$ with higher $r$.
Once we get $P$, we can follow the mechanism of {\color{dblue} Matrix Partition Code} to get 
\color{myblue}a {\color{mbrown}code}
of highest compression efficiency (with that P) in sense of  Theorem \ref{thm2.2} and its second corollary.
\color{black}
{\color{nblue}
\section*{Acknowledgment}

We would like to thank the associate editor and the anonymous reviewers for their times and constructive comments.
}

\bibliographystyle{IEEEtran}
\bibliography{ref}

\end{document}